\newtheorem{theo}{Theorem}
\newtheorem{defi}{Definition}
\newtheorem{lemm}{Lemma}
\begin{document}

\title{Block Distributed Compressive Sensing Based  Doubly Selective  Channel Estimation and Pilot Design for  Large-Scale  MIMO  Systems}

\author{\IEEEauthorblockN{ Bo~Gong,  Lin~Gui, Qibo~Qin, Xiang~Ren, and Wen~Chen}

\IEEEauthorblockA{Department of Electronic Engineering, Shanghai Jiao Tong University, China\\
Email: \{gongbo;~ guilin;~ qinqibo;~ renx;~ wenchen\}@sjtu.edu.cn}

\thanks{Copyright (c) 2015 IEEE. Personal use of this material is permitted. However, permission to use this material for any other purposes must be obtained from the IEEE by sending a request to pubs-permissions@ieee.org.

B. Gong,  L. Gui, Q. Qin, and X. Ren are with the Department of Electronic
Engineering, Shanghai Jiao Tong University, Shanghai 200240, China (e-mail:
gongbo@sjtu.edu.cn). W. Chen is with Department of Electronic Engineering and Shanghai Key Laboratory of Navigation and Location Based Services lab, Shanghai Jiao Tong University.

This work was supported in part by the National Natural Science Foundation of China (61471236, 61420106008, 61671295, 61671294), the 111 Project (B07022), and the Shanghai Key Laboratory of Digital Media Processing; it is also partly sponsored by Shanghai Pujiang Program (16PJD029).
}}

\maketitle

\begin{abstract}
The doubly selective (DS) channel estimation in the large-scale multiple-input multiple-output (MIMO) systems is a challenging problem due to the large number of the channel coefficients to be estimated, which requires unaffordable and prohibitive pilot overhead. In this paper, firstly we conduct the analysis about the common sparsity of the basis expansion model (BEM) coefficients among all the BEM orders and all the transmit-receive antenna pairs. Then a novel pilot pattern is proposed,  which inserts the guard pilots to deal with the inter carrier interference (ICI) under the superimposed pilot pattern. Moreover, by exploiting the common sparsity of the BEM coefficients among different BEM orders and  different antennas, we propose a block  distributed compressive sensing (BDCS) based DS channel estimator for the large-scale MIMO systems. Its structured sparsity leads to the reduction of the pilot overhead under the premise of  guaranteeing the accuracy of the estimation. Furthermore, taking consideration of the block structure, a pilot design algorithm referred to as block discrete stochastic optimization (BDSO) is proposed. It optimizes the pilot positions by reducing the coherence among different blocks of the measurement matrix. Besides, a linear smoothing method is extended to large-scale MIMO systems to improve the accuracy of the estimation. Simulation results verify the performance gains of our proposed estimator and the pilot design algorithm compared with the existing schemes.

\end{abstract}

\begin{IEEEkeywords}
Block distributed compressive sensing,  doubly selective, large-scale MIMO, channel estimation, pilot design.
\end{IEEEkeywords}

\IEEEpeerreviewmaketitle

\section{Introduction}

Large-scale multiple-input multiple-output (MIMO) \cite{Bj??rnson2016}, \cite{Lu2014}  attracts much academic interest and  is considered as a promising technology in the incoming 5G cellular systems \cite{Larsson2014}. It enhances the data throughput and improves the link reliability of wireless communication systems by taking advantage of the high spatial multiplexing gains. In order to benefit from large-scale MIMO, one must obtain accurate channel state information (CSI)  which guarantees data recovery and contributes to multi-antenna array gains.

Time and frequency selective channel, which is also referred to as  doubly-selective (DS) channel,  is related to many wireless access links, such as  high-speed trains \cite{Ren2015} and millimeter-wave  communications \cite{Rangan2014}, \cite{Swindlehurst2014}. Frequency selectivity is caused by multipath propagation and time selectivity results from Doppler shift. For the DS channel estimation, inter-carrier interference (ICI) is a challenging problem, which incurs a large amount of channel coefficients to be estimated and the high complexity of the estimation schemes. In \cite{Ma2003, Tang2007,  Tang2011}, basis expansion model (BEM) was proposed to simplify the estimation process. The work  \cite{Ma2003} presented the research about the optimal training for DS channel estimation.  The work \cite{Tang2007} verified  several channel estimation schemes including the least squares (LS) estimator, the linear minimum mean square error (LMMSE) estimator and the best linear unbiased estimator (BLUE) combined with different BEM basis in single-input single-output (SISO) systems. And then in \cite{Tang2011} they were extended to MIMO systems with 2 transmit antennas.  However, in large-scale MIMO systems with more than a dozen or even dozens of antennas,  the number of channel coefficients to be estimated increases largely. The conventional schemes are not feasible since it requires  unaffordable pilot overhead and prohibitive complexity to avoid the inter-antenna interference (IAI)  and process the ICI. To our best knowledge, the  DS channel estimation in large-scale  MIMO systems has been seldom considered in the existing works.

Compressive sensing (CS) is an important framework to decrease the pilot overhead and the complexity of  channel estimation by taking advantage of the channel sparsity \cite{Eldar2012}.  The channel sparsity in three domains are concerned in the existing literature, including the delay-Doppler domain \cite{Ren2013}, the beam domain \cite{Gao2015}, and the delay domain \cite{Cheng2013}. In delay-Doppler domain,  a large Doppler shift incurs the leakage effects, which increases the channel sparsity and deteriorates the performance. Processing the leakage effects and enhancing the sparsity require high complexity of computation \cite{Taubock2010a}. The sparsity in beam domain relies on an open and wide propagation environment with few scatters. It is usually increased in the environment with rich reflection and the compressive sensing framework is no longer applicable. The most commonly exploited sparsity presents in delay domain in the existing CS based channel estimation schemes. It is analyzed that the broadband channels present sparsity in delay domain.

Various CS based channel estimation schemes appear recently. In \cite{Masood2015}, the authors proposed compressive estimation schemes for flat fading  channels in large-scale MIMO systems. It analyzed the channel sparsity among all the antennas and jointly estimated the channels of all the receive antennas in the base station. The works \cite{Rao2014, Qi2014, Gao2016, Nan2015, Hou2014} proposed  compressive CSI estimation schemes under frequency selective channels. In \cite{Rao2014}, the authors considered the channel model with the common  support and the individual support. Then they proposed an algorithm which adapted to this kind of specific structure for large-scale MIMO systems. The work \cite{Qi2014} proposed a scheme for the uplink channels in large-scale MIMO systems, which was relatively simple since it wasn't concerned with IAI. In \cite{Gao2016}, an approach  with an unknown sparsity was proposed, which was more realistic in the practical systems. The works \cite{Nan2015},\cite{Hou2014} suggested different block-structured pursuit algorithms. In \cite{Ren2015}, the authors  presented the research about DS channel estimation based on CS in SISO systems. It proposed a channel estimation scheme based on the position information in high mobility systems and optimized the corresponding pilot design. The work is extended to MIMO systems with 4 transmit antennas in \cite{Ren2013}, which is referred to as a low coherence compressed (LCC) channel estimation scheme. However, on the one hand, this scheme utilized the sparsity in delay-Doppler domain which is notably increased  by  a large Doppler shift,  and on the other hand, it could not support much more antennas. The work \cite{Cheng2013} introduced distributed compressive sensing (DCS) to DS channel estimation in SISO systems. It utilized the channel sparsity in delay domain and formulated the estimation into a DCS framework, which guaranteed a more accurate recovery.

Mutual coherence is an important factor concerned with the accuracy of the recovery in CS framework \cite{Eldar2012}. In SISO systems,  several algorithms were proposed to decrease the mutual coherence of the measurement matrix. In \cite{Cheng2013}, it proposed discrete stochastic optimization (DSO) to select the suboptimal pilot positions and \cite{Ren2013} proposed an algorithm to jointly optimize the pilot values and the pilot positions. In MIMO systems, \cite{Qi2015} and \cite{He2015} employed the optimization algorithms based on the stochastic search and the genetic algorithm. However, they were designed for orthogonal pilots, which meant the requirement of a large pilot overhead in large-scale MIMO systems. The works \cite{Gao2015}  adopted  the nonorthogonal pilots with equispaced pilot positions for the consideration of reducing the correlation among different virtual channels.

In this paper, we propose a block distributed compressive sensing (BDCS) based DS channel estimation scheme for the large-scale MIMO  systems and   a novel pilot design algorithm corresponding to the unique structure. They reduce the pilot overhead under the premise of guaranteeing the estimation accuracy. In specific, firstly we analyze the common sparsity of the BEM coefficients among all the BEM orders and all the transmit-receive antenna pairs in delay domain. Different from the sparsity analysis in the existing literature \cite{Gao2016, Nan2015, Hou2014}, we focus on the sparsity of the BEM coefficients rather than the channel coefficients.  In addition to the analysis of the common sparsity among different BEM orders for SISO systems in \cite{Cheng2013}, the common sparsity among different transmit antennas is considered here as well; Then,  a novel pilot pattern is proposed for DS channels in large-scale MIMO systems. It combines the property of the superimposed structure and the setting of the guard pilots. The superimposed structure reflects  the superimposed  pilot positions of different transmit antennas, which reduces the pilot overhead. The guard pilots are designed for the ICI of the DS channels, which avoid the contamination from the data subcarriers; Moreover, we propose a BDCS based channel estimator for DS channels in large-scale MIMO systems. The unknown BEM coefficients present block and common sparsity simultaneously.  The structure benefits the localization of the nonzero elements, and then leads to the performance improvement and the pilot overhead reduction; Furthermore, taking advantage of the analyzed block sparsity,  a novel  pilot design algorithm, referred to as block discrete stochastic optimization (BDSO), is proposed.  It also contributes to the performance gain and the spectral efficiency.

The remainder of this paper is organized as follows. Section II introduces the system model and the fundamentals of CS.  In section III,  we present the proposed channel estimation scheme and the proposed pilot design algorithm. In Section IV, we conduct the analysis of the complexity.   In section V, simulation results verify the validity of our work.  Section VI concludes this paper.

\emph{Notations}: $\left(  \cdot  \right)^T$ denotes matrix transpose, $\left(  \cdot  \right)^H$ represents matrix conjugate transpose.  $diag(\cdot)$ means a diagonal matrix, $\left|  \cdot  \right|$ denotes the absolute value, $\left\langle { \cdot , \cdot } \right\rangle $ denotes the inner product, ${\left\|  \cdot  \right\|_2}$ stands for the Euclidean norm, ${\left\|  \cdot  \right\|_0}$ denotes the number of nonzero values.
$\otimes$ represents Kronecker product. $\mathcal{S}$ indicates a set, ${A[m,n]}$ represents  the $(m+1,n+1)$-th  element of matrix $\mathbf A$. ${{[\mathbf A]}_\mathcal{S}}$ represents the selected rows of $\mathbf A$, whose indices correspond to the set $\mathcal{S}$. $\mathcal{CN}(0,{\sigma ^2})$  represents the complex Gaussian distribution with zero mean and $\sigma ^2$ variance. $\mathbf{I}_x$ means the identity matrix of order $x$. $vec(\mathbf{A})$ denotes the column-ordered vectorization of matrix $\mathbf{A}$.

\section{System Model and Fundamentals}

In this section, we introduce our system model, which includes the  transmission model and the channel model of the  large-scale MIMO-orthogonal frequency division multiplexing (OFDM) systems. Besides, the fundamental knowledge of CS and structured compressive sensing (SCS) is briefly illustrated.
\subsection{System Model}
\subsubsection{Transmission Model}
\begin{figure}[t]
\centering
\includegraphics[scale=0.5]{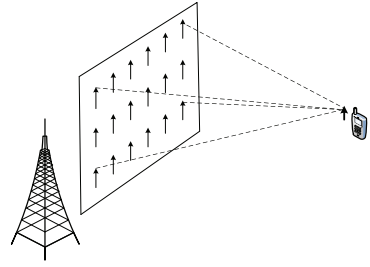}
\caption{Transmission model}
\label{1}
\end{figure}

We consider a large-scale MIMO-OFDM system. The base station is equipped  with a great many  antennas. It serves a number of terminals with a single antenna. As shown in Fig. 1, the antenna array is arranged in a rectangle, which consists of ${N_B}$ antennas.
We adopt  frequency division duplex (FDD) mode in our system. In the OFDM system, for any transmit-receive antenna pair, there exist $N$ subcarriers in a parallel transmission. A part  of the subcarriers  are selected as pilot subcarriers to estimate the channel coefficients and the remaining ones are responsible for data.
\subsubsection{Channel Model}

We consider a DS channel model. The multipath effect leads to the frequency selectivity and the Doppler shift results in the time selectivity.  For each transmit-receive antenna pair between the user side and the base station, the channel in  time domain can be assumed to be a finite impulse response (FIR) filter. Let $h^{({n_B})}[n,l]$ represent the channel coefficient  of the $(l+1)$-th tap at the $(n+1)$-th  instant of the channel between the ${n_B}$-th antenna in the base station and the terminal, in which ${n_B} \in [1,N_B]$, $l \in  [0, L-1]$, and $n \in [0,N-1]$. As $L$ is the length of the channel, we have
\begin{equation}
h^{({n_B})}[n,l]=0, \quad  l < 0 \quad or \quad  l  \ge  L.
\end{equation}
Assume that $\mathbf{H}_t^{({n_B})} \in {\mathbb{C}^{N \times N}}$ describes channel matrix in time domain, and its elements can be expressed as
\begin{equation}\label{d1}
{{H^{({n_B})}_t} [p,q]} = h^{({n_B})}[p,\bmod (p - q,N)], \quad  p,q \in [0,N - 1].
\end{equation}
 The channel matrix in frequency domain ${\mathbf{H}_f^{({{n_B}})}}$ can be derived from
 \begin{equation}\label{d2}
  \mathbf{H}_f^{({n_B})}=\mathbf{W}{\mathbf{H}_t^{({{n_B}})}}\mathbf{W}^H,
\end{equation}
in which $\mathbf{W}$ is the discrete fourier transform (DFT) matrix and  ${W[m,n]} = {N^{-1/2}}\exp \left( {-j2\pi mn/N} \right)$, $m,n \in [ {0,N-1} ]$, $j^2=-1$.

It is found  that for DS channels, we have to estimate the channel coefficients of each channel tap at each time instant. The total number of coefficients to be estimated for a transmit-receive antenna pair is $NL$.  $N$ is the number of the samplings in time domain, which is equal to the number of the subcarriers, and  $L$ is the number of multipaths, i.e.,  the length of the channel. BEM \cite{Tang2011} is an important technique for DS channel estimation, which is always introduced to reduce the number of the  coefficients  to be estimated.  Let ${\mathbf{h}}_l^{(n_B)} {=} {( {
{h^{(n_B)}[0,l]}, \cdots ,{h^{(n_B)}[N - 1,l]}  } )^T}  \in {\mathbb{C}^{N \times 1}}$ denote the channel coefficients of the $l$-th channel tap and the $n_B$-th transmit antenna. Each ${\mathbf{h}^{(n_B)}_l}, l \in [0,L-1], n_B \in [1, N_B]$ can be expressed as
\begin{equation}\label{d3}
{{\mathbf{h}}^{(n_B)}_l} = {\mathbf{V}}{{\bm{\theta}}^{(n_B)}_l} + {{\bm{\varepsilon }}^{(n_B)}_l},
\end{equation}
 in which,
 \[{{\bm{\theta}}^{(n_B)}_l} = {
({\theta^{(n_B)}[0,l]},  {\theta^{(n_B)}[1,l]},  { \cdots },  {\theta^{(n_B)}[D - 1,l]})^T
} \in  {\mathbb{C}^{D \times 1}}\] is the BEM coefficients, and \[{\bm{\varepsilon}^{(n_B)} _l} = {({\varepsilon^{(n_B)} [0,l]}, {\varepsilon^{(n_B)} [1,l]},  \cdots, {\varepsilon ^{(n_B)}[N - 1,l]})^T}\in  {\mathbb{C}^{N \times 1}}\] is the BEM modeling error. Besides, ${\mathbf{V}} {=} ({{{\mathbf{v}}_0}}, {{{\mathbf{v}}_1}}, \cdots, {{{\mathbf{v}}_{D - 1}}})$, in which, ${\mathbf{v}_d}$ is the BEM basis function, $d \in [0,D-1]$,  and $D$ $(D \ll N)$ is the number of the BEM orders. Apparently the number of channel coefficients to be estimated is reduced from $NL$ to $DL$ for one transmit-receive antenna pair. The vector of the channel taps for the $n_B$-th transmit antenna can be formulated as
\begin{equation}\label{recovery}
\overline {\mathbf{h}}^{(n_B)}  = \left( {{\mathbf{V}} \otimes {{\mathbf{I}}_L}} \right)\overline {\bm{\theta}}^{(n_B)}  + \overline {\bm{\varepsilon }}^{(n_B)} ,
\end{equation}
in which
\begin{equation}
\begin{split}
{\overline{\mathbf{h}}}^{(n_B)}  {=} {( {{{(\widetilde {\mathbf{h}}}^{(n_B)}_0)^T}, \cdots ,{{(\widetilde {\mathbf{h}}}^{(n_B)}_{N - 1})^T}})^T},  \\
{\overline {\bm{\theta}}}^{(n_B)}  {=} {( {{{(\widetilde {\bm{\theta}}}^{(n_B)}_0)^T}, \cdots ,{{(\widetilde {\bm{\theta}}}^{(n_B)}_{D {-} 1}}})^T)^T},   \\
 \overline {\bm{\varepsilon }}^{(n_B)}  {=} {( {(\widetilde {\bm{\varepsilon }}_0^{(n_B)})^T, \cdots ,(\widetilde {\bm{\varepsilon }}_{N - 1}^{(n_B)})^T})^T},
 \end{split}
\end{equation}
 with
 \begin{equation}  \nonumber
\begin{split}
{\widetilde {\mathbf{h}}^{(n_B)}_n} = {( {h^{(n_B)}[n,0],  \cdots,  h^{(n_B)}[n,L - 1])}^T} {\in} {\mathbb{C}^{L \times 1}},  \\
{\widetilde {\bm{\theta}}^{(n_B)}_d} = {({\theta^{(n_B)}[d,0],  \cdots,  \theta^{(n_B)}[d,L - 1])}^T} {\in} {\mathbb{C}^{L \times 1}}, \\
{\widetilde {\bm{\varepsilon }}^{(n_B)}_n} = {( {{\varepsilon }^{(n_B)}[n,0],  \cdots,  {\varepsilon }^{(n_B)}[n,L - 1]})^T} {\in} {\mathbb{C}^{L \times 1}},
\end{split}
\end{equation}
for $n \in [0,N-1]$ and $d \in [0,D-1]$. $\widetilde {\mathbf{h}}^{(n_B)}_n$ represents the channel coefficients at the $(n+1)$-th time instant of the $n_B$-th antenna. $\widetilde {\bm{\theta}}^{(n_B)}_d$ is the BEM coefficients of the $d$-th order and the $n_B$-th antenna. ${\widetilde {\bm{\varepsilon }}^{(n_B)}_n}$ is the modeling error.

Now we  briefly derive the expression of the BEM in frequency domain. From
(\ref{d1}) and (\ref{d3}), by simple arrangement and observation, we have
\begin{equation}\label{7}
{\mathbf{H}}_t^{({n_B})} = \sum\limits_{d = 0}^{D - 1} {diag{( {{\mathbf{v}}_d})} } {\tilde{\mathbf \Theta }}_d^{({n_B})} + {{{{\mathbf  E}}}^{({n_B})}},
\end{equation}
where ${\tilde{\mathbf \Theta }}_d^{({n_B})}$ is a circulant matrix with ${\widetilde {\bm{\theta}}^{(n_B)}_d} {=} {[{\theta^{(n_B)}[d,0],  \cdots,  \theta^{(n_B)}[d,L - 1]]}^T}$ as its first column \cite{Tang2011}. Due to its circularity, ${\tilde{\mathbf \Theta }}_d^{({n_B})}$ can be diagonalized as
\begin{equation}\label{8}
{\tilde{\mathbf \Theta }}_d^{({n_B})} = {{\mathbf{W}}^H}diag{( {{\mathbf{W}}_L}{\tilde{\bm \theta }}_d^{({n_B})})} {\mathbf{W}},
\end{equation}
where ${{\mathbf{W}}_L}$ denotes the submatrix that extracts the first $L$ columns of $\mathbf{W}$. Accordingly, substitute (\ref{8}) into (\ref{7}) and  the time domain channel matrix can be denoted as
\begin{equation}\label{td}
{\mathbf{H}}_t^{({n_B})} = \sum\limits_{d = 0}^{D - 1} {diag{( {{\mathbf{v}}_d})} {{\mathbf{W}}^H}diag{( {{\mathbf{W}}_L}{\tilde{\bm \theta }}_d^{({n_B})})} {\mathbf{W}}}  + {{\mathbf{{E}}}^{({n_B})}}.
\end{equation}
Substituting (\ref{td}) into (\ref{d2}), it is not hard to find that the channel matrix in frequency domain can be expressed as
\begin{equation}\label{matfre}
 {{\mathbf{H}}^{(n_B)}_f} = \sum\limits_{d = 0}^{D - 1} {{{\mathbf{V}}_d}{{\bm{\Theta}}^{(n_B)}_d}}  + {{\bm{\Delta }}^{(n_B)}},
\end{equation}
 in which,
 \begin{align}
  &{\mathbf{V}_d} = {\mathbf{W}}diag\left( {{\mathbf{v}_d}} \right)\mathbf{W}^H,  \notag\\ &{\bm{\Theta}^{(n_B)}_d} = diag{({{\sqrt N} {\mathbf{W}}{(
{{{\widetilde{\bm{\theta}}}{^{(n_B)}_d}}^T,} {{\mathbf{0}_{1 \times (N - L)}}
})}^T}),}    \notag
\end{align}
and ${{\bm{\Delta }}^{(n_B)}}$ is the modeling error \cite{Tang2007}.

\subsection{CS and SCS}

In this part, the basic knowledge of CS and SCS are introduced. The SCS means that the sparsity presents a certain structure, including DCS, block compressive sensing (BCS), and BDCS here.

\subsubsection{CS}
CS is an attractive framework, which recovers a high-dimensional sparse signal from a low dimensional observed vector.  It solves the  underdetermined problem
\begin{equation}
\mathbf{r} = \mathbf{A} \mathbf{x}+\mathbf{e},
\end{equation}
 in which $\mathbf{x} \in \mathbb{C}^{Z \times 1}$ is an unknown high-dimensional vector,  $\mathbf{A} \in \mathbb{C}^{M \times Z} (M \ll Z)$  is the measurement matrix, $\mathbf{r} \in \mathbb{C}^{M \times 1}$ represents the observed low-dimensional vector, and $\mathbf{e}$ denotes the noise term. The theory of CS is based on two important premises:
 \begin{itemize}
  \item The first one is the sparsity of the high-dimensional vector, which means that  $\mathbf{x}$ is a sparse vector with sparsity $K$, i.e., ${\left\| \mathbf{x} \right\|_{{0}}} = K, K \ll Z$.
   \item The second one is that the measurement matrix $\mathbf{A}$ satisfies restricted isometry  property (RIP) condition \cite{Eldar2012}.
 \end{itemize}
  If these two conditions are satisfied, a high probability of the exact recovery of $\mathbf{x}$ can be guaranteed. But it should be noted that it is difficult to  verify RIP condition due to the prohibitive complexity and the tremendous computation. In practical schemes, mutual coherence property (MCP) \cite{Duarte2011} is an important reference value of the measurement matrix, which reflects the coherence between columns.
\begin{defi}
The MCP of a matrix $\mathbf{A}$ is
\begin{equation}
\mu(\mathbf{A})=\max_{1\leq i\neq j\leq Z}\frac{\vert\left\langle \mathbf{a}_{i},\mathbf{a}_{j}\right\rangle \vert}{{\lVert\mathbf{a}_{i}\rVert}_{2}{\lVert\mathbf{a}_{j}\rVert}_{2}},
\end{equation}
\end{defi}
where ${\mathbf a}_{i}$ and ${\mathbf a}_{j}$ denote the $i$-th and the $j$-th columns of ${\mathbf A}$.
\begin{lemm}[\cite{Donoho2006}]
Suppose that $\mathbf{A}$ has MCP $\mu$ and that the sparsity of $\mathbf{x}$ is $K$ with $K < ({\raise0.7ex\hbox{$1$} \!\mathord{\left/
 {\vphantom {1 \mu }}\right.\kern-\nulldelimiterspace}
\!\lower0.7ex\hbox{$\mu $}} + 1)/4$. Furthermore, suppose that we obtain measurements of the form ${\mathbf{r}} = {\mathbf{Ax}} + {{\mathbf{e}}}$. Then when the set of solutions ${\mathop{\rm B}\nolimits} (\mathbf{r}) = \{ \mathbf{z}:{{\left\| {\mathbf{Az} - \mathbf{r}} \right\|}_2} \le \rm{\zeta} \} $, $\rm{\zeta}$ is a constant value,  the solution ${\hat {\mathbf x}}$ obeys
\begin{equation}
{\left\| {{\mathbf{x}} - {\hat {\mathbf x}}} \right\|_2} \le \frac{{{{\left\| {{\mathbf{e}}} \right\|}_2} + {\rm{\zeta }}}}{{\sqrt {1 - \mu (4K - 1)} }}.
\end{equation}
\end{lemm}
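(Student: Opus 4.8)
The plan is to reduce this recovery guarantee to a restricted conditioning estimate for $\mathbf{A}$ extracted from its mutual coherence $\mu$, following the classical coherence-based stability argument. Write $\mathbf{d} = \mathbf{x} - \hat{\mathbf{x}}$ for the recovery error. Since $\hat{\mathbf{x}} \in {\mathop{\rm B}\nolimits}(\mathbf{r})$ we have $\|\mathbf{A}\hat{\mathbf{x}} - \mathbf{r}\|_2 \le \zeta$, and since $\mathbf{r} = \mathbf{Ax} + \mathbf{e}$ the triangle inequality gives
\[
\|\mathbf{Ad}\|_2 = \|\mathbf{A}\hat{\mathbf{x}} - \mathbf{Ax}\|_2 \le \|\mathbf{A}\hat{\mathbf{x}} - \mathbf{r}\|_2 + \|\mathbf{r} - \mathbf{Ax}\|_2 \le \zeta + \|\mathbf{e}\|_2.
\]
This is the easy upper half of the argument: feasibility of both $\mathbf{x}$ and $\hat{\mathbf{x}}$ forces the image of the error to be small.

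The core of the proof is a matching lower bound of the form $\|\mathbf{Ad}\|_2 \ge \sqrt{1 - \mu(4K-1)}\,\|\mathbf{d}\|_2$. First I would control the support of $\mathbf{d}$: because $\mathbf{x}$ is feasible and $\hat{\mathbf{x}}$ is the sparsest feasible point, $\hat{\mathbf{x}}$ is itself sparse, so $\mathbf{d}$ is supported on a set $\mathcal{T}$ whose cardinality is controlled by $K$, and the support-size bookkeeping is what yields the factor $4K-1$ in the stated (conservative) form. Restricting to $\mathcal{T}$, I would examine the Gram matrix $\mathbf{G} = {[\mathbf{A}]_\mathcal{T}}^H {[\mathbf{A}]_\mathcal{T}}$, whose diagonal entries are the (unit) squared column norms and whose off-diagonal entries are inner products bounded in modulus by $\mu$ according to Definition~1. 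Applying the Gershgorin circle theorem to $\mathbf{G}$ bounds every eigenvalue away from zero by $1 - (|\mathcal{T}| - 1)\mu \ge 1 - (4K-1)\mu$, which is strictly positive precisely under the hypothesis $K < (1/\mu + 1)/4$. Hence $\|\mathbf{Ad}\|_2^2 = \mathbf{d}^H \mathbf{G}\, \mathbf{d} \ge \left(1 - \mu(4K-1)\right)\|\mathbf{d}\|_2^2$, which is the desired lower bound.

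Combining the two bounds yields $\sqrt{1 - \mu(4K-1)}\,\|\mathbf{d}\|_2 \le \|\mathbf{Ad}\|_2 \le \zeta + \|\mathbf{e}\|_2$, and dividing through by the positive quantity $\sqrt{1 - \mu(4K-1)}$ gives the claimed estimate ${\|\mathbf{x} - \hat{\mathbf{x}}\|_2} \le (\|\mathbf{e}\|_2 + \zeta)/\sqrt{1 - \mu(4K-1)}$. I expect the main obstacle to be the lower-bound step, namely the passage from the single scalar coherence bound of Definition~1 to a uniform spectral lower bound for every relevant submatrix $[\mathbf{A}]_\mathcal{T}$ (the coherence-to-restricted-isometry estimate via Gershgorin), together with the careful accounting of the support size of $\mathbf{d}$ needed to arrive at exactly $4K-1$ and to match the sparsity threshold $K < (1/\mu + 1)/4$. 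The upper-bound and combination steps are routine once that conditioning estimate is in hand.
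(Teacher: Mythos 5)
The paper never proves this lemma --- it is imported verbatim from the cited reference of Donoho, Elad and Temlyakov --- so your attempt has to be judged against the standard argument there. Your upper-bound step $\|\mathbf{A}\mathbf{d}\|_2 \le \zeta + \|\mathbf{e}\|_2$ is fine (note it silently requires $\|\mathbf{e}\|_2 \le \zeta$, so that $\mathbf{x}$ itself lies in $\mathrm{B}(\mathbf{r})$; the lemma's statement suppresses this assumption too). The gap is in your lower bound. The constants in the statement --- the threshold $K < (1/\mu + 1)/4$ and the denominator $1-\mu(4K-1)$ --- identify it as the stability theorem for the $\ell_1$ (basis-pursuit denoising) minimizer, which is also how the paper uses it (BP is invoked immediately afterwards as a recovery algorithm). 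Your argument instead takes $\hat{\mathbf{x}}$ to be the sparsest feasible point, i.e.\ the $\ell_0$ minimizer. For the $\ell_1$ minimizer there is no reason for $\hat{\mathbf{x}}$ to be sparse, so the error $\mathbf{d}$ has no small support set $\mathcal{T}$, and the step ``restrict to $[\mathbf{A}]_{\mathcal{T}}$ and apply Gershgorin'' collapses; this is precisely the hard part of the theorem, not bookkeeping. Moreover, even under your $\ell_0$ reading, minimality gives $\|\hat{\mathbf{x}}\|_0 \le K$, hence $|\mathcal{T}| \le 2K$ and a Gershgorin constant $1-\mu(2K-1)$: support counting can never produce $4K-1$ (you would need $|\mathcal{T}|$ as large as $4K$ even to write it). Read charitably, your argument proves the companion $\ell_0$ theorem, whose stronger constant implies the stated inequality a fortiori, but it does not prove the lemma as meant.

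The two ideas missing for the actual ($\ell_1$) statement are: (i) the cone constraint --- since $\mathbf{x}$ is feasible and $\hat{\mathbf{x}}$ is $\ell_1$-minimal, $\|\hat{\mathbf{x}}\|_1 \le \|\mathbf{x}\|_1$, which with $\mathcal{T} = \mathrm{supp}(\mathbf{x})$, $|\mathcal{T}| = K$, forces $\|\mathbf{d}_{\mathcal{T}^c}\|_1 \le \|\mathbf{d}_{\mathcal{T}}\|_1$ and hence, by Cauchy--Schwarz, $\|\mathbf{d}\|_1 \le 2\|\mathbf{d}_{\mathcal{T}}\|_1 \le 2\sqrt{K}\,\|\mathbf{d}\|_2$; and (ii) a coherence bound on the \emph{full} quadratic form rather than a restricted one: since the Gram matrix of $\mathbf{A}$ has unit diagonal and off-diagonal entries of modulus at most $\mu$,
\[
\|\mathbf{A}\mathbf{d}\|_2^2 \;\ge\; \|\mathbf{d}\|_2^2 - \mu\left(\|\mathbf{d}\|_1^2 - \|\mathbf{d}\|_2^2\right) \;=\; (1+\mu)\|\mathbf{d}\|_2^2 - \mu\|\mathbf{d}\|_1^2 \;\ge\; \left(1 - \mu(4K-1)\right)\|\mathbf{d}\|_2^2 ,
\]
where the last step uses $\|\mathbf{d}\|_1^2 \le 4K\|\mathbf{d}\|_2^2$. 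This is where the factor $4K-1$ actually comes from; combining this lower bound with your upper bound and dividing by $\sqrt{1-\mu(4K-1)}$ (positive exactly under the stated threshold) finishes the proof in the way you outline.
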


Lemma 1 verifies that a smaller value of the MCP will lead to a more accurate recovery of $\mathbf{x}$.  Basis pursuit (BP) \cite{Chen1998} and orthogonal matching pursuit (OMP) \cite{Mallat1993} are the widely adopted recovery algorithms of CS.

\subsubsection{SCS}
 Several models of SCS are elaborated as follows. We summarize the properties of the problems and the corresponding recovery algorithms.
\begin{itemize}
\item \emph{DCS}:
 Many applications are concerned with a problem that  several high-dimensional sparse vectors with the same positions of  nonzero elements are compressed by a common measurement matrix,
\begin{equation}\label{cse}
\mathbf{R}_j = \mathbf{A}{\mathbf{X}_j} + {{\mathbf{w}}_j}, \quad  j \in [0,J-1].
\end{equation}
It is inefficient to recover each sparse vector separately. DCS framework is applied to jointly compress and recover the multiple correlated sparse  signals.  The basic form of DCS is
\begin{equation}\label{dcse}
\mathbf{R} = \mathbf{A}\mathbf{X} + \mathbf{w},
\end{equation}
 in which,
 \begin{align}
 \mathbf{R} &= [{\mathbf{R}_0}, \cdots ,{\mathbf{R}_{J-1}}] \in {\mathbb{C}^{M \times J}}, \notag \\ \mathbf{X} &= [{\mathbf{X}_0}, \cdots , {\mathbf{X}_{J-1}}] \in {\mathbb{C}^{Z \times J}},  \notag
 \end{align}
 and
\begin{equation}
 {\mathbf{w}} = [{{\mathbf{w}}_0}, \cdots ,{{\mathbf{w}}_{J - 1}}] \in {\mathbb{C}^{M \times J}} \notag
 \end{equation}
  is the noise matrix.  All the columns of $\mathbf{X}$ share the same nonzero positions. The recovery accuracy also relies on the common sparsity of $\mathbf{X}$ and the property of the measurement matrix $\mathbf{A}$ as CS above. As proved in \cite{Baron2005a},  DCS provides higher accuracy of the recovery with fewer observed values than CS by utilizing the common sparsity. The reason is that multiple vectors contribute to the localization of the nonzero elements. Simultaneous-OMP (SOMP) \cite{Baron2005a} is an important algorithm for the recovery of DCS.
 \item \emph{BCS}: The BCS means that the sparsity of the unknown high-dimensional vector presents block sparsity. The basic form of the problem is
\begin{equation}
{\mathbf{r}}' = \mathbf{A}{\mathbf{x}}'+{\mathbf{e}}',
\end{equation}
in which, for better illustration of the block sparsity of ${\mathbf{x}}'$, it is decomposed as
\begin{equation}
{\mathbf{x}}' = {\left[
{{\mathbf{x}}_1^{T}},{{\mathbf{x}}_2^{T}}, \cdots,{{\mathbf{x}}_{T}^{T}} \right]}^{T}, \quad   {\mathbf{x}}_t \in {{\mathbb{C}}^{d \times 1}}, \, t \in [1,T],
\end{equation}
and accordingly the measurement matrix $\mathbf{A}$ can be decomposed as
\begin{equation}
{\mathbf{A}} = \left[
{{{\mathbf{A}}_1}}, {{{\mathbf{A}}_2}}, \cdots, {{{\mathbf{A}}_{{T}}}} \right], \quad \mathbf{A}_t \in  {\mathbb{C}^{M \times d}}.
\end{equation}
We can see that the unknown high-dimensional vector ${\mathbf{x}}'$ is decomposed into $T$ parts. The $K$ sparsity indicates that $K$ of them are nonzero blocks while the remaining $T-K$ parts are zero blocks, $K \ll T$. The nonzero block means that all the elements of the block are nonzero, and the zero block is constituted by $d$ zero elements. We set an example for further clarification about the block sparsity with assumption of $T=6$ and $K=2$ as
\begin{equation}
{\mathbf{x}}' = \left[ {\begin{array}{*{20}{c}}
{\underbrace {{\mathbf{x}}_1^{T}}_{\mathbf{0}}}&{\underbrace {{\mathbf{x}}_2^{T}}_{{\mathbf{x}}_2^{T}}}&{\underbrace {{\mathbf{x}}_3^{T}}_{\mathbf{0}}}&{\underbrace {{\mathbf{x}}_4^{T}}_{\mathbf{0}}}&{\underbrace {{\mathbf{x}}_5^{T}}_{{\mathbf{x}}_5^{T}}}&{\underbrace {{\mathbf{x}}_6^{T}}_{\mathbf{0}}}
\end{array}} \right]^{T},
\end{equation}
in which, ${{\mathbf{x}}_2}$, ${{\mathbf{x}}_5}$ are the nonzero blocks while ${{\mathbf{x}}_1}$, ${{\mathbf{x}}_3}$, ${{\mathbf{x}}_4}$, ${{\mathbf{x}}_6}$ are the zero blocks. The recovery algorithm, block orthogonal matching pursuit (BOMP), is proposed in \cite{He2015} to solve the problem of BCS.  The block sparsity is utilized to improve the accuracy of the nonzero elements localization and contributes to the performance gain.
\item \emph{BDCS}: The BDCS combines the property of the DCS and the BCS. The structure includes both the block sparsity of each unknown high-dimensional vector and the common sparsity among different unknown high-dimensional vectors. The form of the BDCS is described as
\begin{equation}
{{\mathbf{R}}'} = {\mathbf{A}}{{\mathbf{X}}'}.
\end{equation}
Similar to BCS, the measurement matrix $\mathbf{A}$ is decomposed as
\begin{equation}
{\mathbf{A}} = \left[
{{{\mathbf{A}}_1}}, {{{\mathbf{A}}_2}}, \cdots, {{{\mathbf{A}}_{{T}}}} \right], \quad \mathbf{A}_t \in  \mathbb{C}^{M \times d},
\end{equation}
and the unknown matrix ${\mathbf{X}}'$ can be decomposed into $T \times J$ parts as
\begin{equation}
{{\mathbf{X}}'} = \left[ {\begin{array}{*{20}{c}}
{{\mathbf{x}}_{11}}& \cdots &{{\mathbf{x}}_{1J}}\\
{{\mathbf{x}}_{21}}& \cdots &{{\mathbf{x}}_{2J}}\\
 \vdots & \ddots & \vdots \\
{{\mathbf{x}}_{T1}}& \cdots &{{\mathbf{x}}_{TJ}}
\end{array}} \right],
\end{equation}
in which, ${\mathbf{x}}_{tj} {\in} \mathbb{C}^{d \times 1}$,  $t {\in} [1,T], \, j {\in} [1,J]$.
The block sparsity is reflected on the $j$-th column of ${{\mathbf{X}}'}$, $j \in [1,J]$. $K$ out of the $T$ parts are nonzero blocks while the remaining ones are zero blocks. The common sparsity means that the position of the $K$ nonzero blocks are the same among all the $J$ columns of ${{\mathbf{X}}'}$. We further explain the block and common sparsity with an example under the assumption of $K=2, \, T=6, \, J=3$.
\begin{equation}
{{\mathbf{X}}'} = \left[ {\begin{array}{*{20}{c}}
{\underbrace {{\mathbf{x}}_{11}^{T}}_{\mathbf{0}}}&{\underbrace {{\mathbf{x}}_{21}^{T}}_{{\mathbf{x}}_{21}^{T}}}&{\underbrace {{\mathbf{x}}_{31}^{T}}_{\mathbf{0}}}&{\underbrace {{\mathbf{x}}_{41}^{T}}_{\mathbf{0}}}&{\underbrace {{\mathbf{x}}_{51}^{T}}_{{\mathbf{x}}_{51}^{T}}}&{\underbrace {{\mathbf{x}}_{61}^{T}}_{\mathbf{0}}}\\
{\underbrace {{\mathbf{x}}_{12}^{T}}_{\mathbf{0}}}&{\underbrace {{\mathbf{x}}_{22}^{T}}_{{\mathbf{x}}_{22}^{T}}}&{\underbrace {{\mathbf{x}}_{32}^{T}}_{\mathbf{0}}}&{\underbrace {{\mathbf{x}}_{42}^{T}}_{\mathbf{0}}}&{\underbrace {{\mathbf{x}}_{52}^{T}}_{{\mathbf{x}}_{52}^{T}}}&{\underbrace {{\mathbf{x}}_{62}^{T}}_{\mathbf{0}}}\\
{\underbrace {{\mathbf{x}}_{13}^{T}}_{\mathbf{0}}}&{\underbrace {{\mathbf{x}}_{23}^{T}}_{{\mathbf{x}}_{23}^{T}}}&{\underbrace {{\mathbf{x}}_{33}^{T}}_{\mathbf{0}}}&{\underbrace {{\mathbf{x}}_{43}^{T}}_{\mathbf{0}}}&{\underbrace {{\mathbf{x}}_{53}^{T}}_{{\mathbf{x}}_{53}^{T}}}&{\underbrace {{\mathbf{x}}_{63}^{T}}_{\mathbf{0}}}
\end{array}} \right]^T.
\end{equation}
The recovery algorithms, structured subspace pursuit (SSP) \cite{Gao2014} and block simultaneous orthogonal matching pursuit (BSOMP) \cite{Qin2016},  exploit both the common sparsity and the block sparsity  to get the more accurate recovery performance.
\end{itemize}

\section{The Proposed Estimator}

In this section, a novel DS channel estimator is proposed for large-scale MIMO systems by extending the DS channel estimation scheme  in SISO systems \cite{Cheng2013}. In the process of extension,  firstly we propose a novel pilot pattern which additionally considers the superimposed pilot structure of all the antennas. Then the estimator is formulated as a BDCS based problem which is different from the DCS based problem for SISO systems. Furthermore exploiting the block sparsity, the pilot design algorithm optimizes the coherence among different blocks of the measurement matrix in comparison to the different columns in SISO systems.

\subsection{The common sparsity of BEM coefficients}

In this part, we analyze the common sparsity of the BEM coefficients for large-scale MIMO systems. In addition to the common sparsity among different BEM orders, which is discussed in \cite{Cheng2013} for SISO systems, we consider the common sparsity among different antennas here as well.
\begin{theo}
 The elements of  the BEM coefficients set $\{ {\widetilde {\bm{\theta}}} _d^{(n_B)}\}$,  $n_B \in [1, N_B]$, $d \in [0,D-1]$, in a large-scale MIMO system share common sparsity among different BEM orders and different transmit antennas under the condition of $\frac{{{s_{\max }}}}{C} \le \frac{1}{{10BW}}$, in which $s_{max}$ denotes the maximum distance  between any two  transmit antennas, $C$ is the speed of light and $BW$ is the signal bandwidth.
\end{theo}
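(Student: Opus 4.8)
The plan is to reduce the claim to two statements about the \emph{support in the delay domain}, namely the active–tap set $\mathcal{S}^{(n_B)} = \{\, l : h^{(n_B)}[n,l]\neq 0 \text{ for some } n \,\}$, and to show first that this support is independent of the BEM order $d$ and then that it is independent of the transmit antenna $n_B$. The order–independence is essentially the SISO result of \cite{Cheng2013}: from (\ref{d3}), if the $l$-th tap is inactive then $h^{(n_B)}[n,l]=0$ for every $n$, so ${\mathbf{h}}^{(n_B)}_l=\mathbf{0}$ and, up to the modeling error ${\bm{\varepsilon}}^{(n_B)}_l$, every BEM coefficient $\theta^{(n_B)}[d,l]$ with $d\in[0,D-1]$ vanishes. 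Hence the nonzero positions of $\widetilde{\bm{\theta}}^{(n_B)}_d$ coincide with $\mathcal{S}^{(n_B)}$ for all $d$, which already yields common sparsity across BEM orders.

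The genuinely new task is to show that $\mathcal{S}^{(n_B)}$ is the same for all $n_B$. First I would introduce a ray–based physical model in which the underlying continuous–time channel is a sum over physical paths $p$, each carrying a delay $\tau_p^{(n_B)}$ that depends on the antenna position, and I would identify the discrete tap with the nearest integer $l=\mathrm{round}\!\left(\tau_p^{(n_B)}\,BW\right)$ using the sampling period $T_s=1/BW$. The key geometric estimate is that, for a fixed path, the delays seen at two antennas differ only by the extra propagation distance divided by $C$, which is bounded by $s_{\max}/C$; this is exactly where the hypothesis $s_{\max}/C\le 1/(10\,BW)$ enters.

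Next I would convert this delay bound into a tap–index statement: since $|\tau_p^{(n_B)}-\tau_p^{(n_B')}|\le s_{\max}/C\le T_s/10$, the delay spread of any single physical path across the whole array is below one tenth of a tap spacing. I would then argue that such a sub–tap spread is negligible, so each physical path deposits its energy in the \emph{same} delay tap for every antenna, whence $\mathcal{S}^{(n_B)}$ does not depend on $n_B$. Combining this with the order–independence above gives that the whole family $\{\widetilde{\bm{\theta}}^{(n_B)}_d\}$ shares a common support across both $d$ and $n_B$, which is precisely the asserted common sparsity.

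The main obstacle I anticipate is making ``negligible sub–tap spread implies identical tap support'' rigorous rather than heuristic: a path whose delay lies near a tap boundary could in principle round to different taps at different antennas even when its spread is below $T_s/10$, and the finite bandwidth produces sinc–type leakage onto neighbouring taps. I would address this by working with the dominant–tap (effective–support) notion already implicit in the sparsity analysis of \cite{Cheng2013}, showing that the $1/10$ margin forces the leakage caused by the cross–antenna delay difference to stay below the sparsity threshold, so that the set of dominant taps — the operative support for CS recovery — is unchanged across the array. Quantifying ``negligible'' against the modeling error ${\bm{\Delta}}^{(n_B)}$ and tying it to the factor $1/10$ is the crux of the argument.
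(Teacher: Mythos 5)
Your proposal is correct and follows essentially the same route as the paper: delay-domain support invariance (across time instants and, under the condition $s_{\max}/C \le 1/(10BW)$, across antennas) combined with the linear relation ${\mathbf{h}}_l^{(n_B)} \approx {\mathbf{V}}{\bm{\theta}}_l^{(n_B)}$ to transfer that common support to the BEM coefficients across all orders $d$ and antennas $n_B$. The only difference is that the antenna-invariance step you attempt to derive from a ray model --- and correctly flag as the hard-to-rigorize crux --- is simply cited by the paper as a lemma from \cite{Masood2015} (the links ``scatter invariantly in space,'' hence the strong-tap indices coincide), so the paper does not resolve that difficulty either.
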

\begin{proof}
The proof is conducted by three steps. Firstly we analyze the common sparsity of the channel coefficients  in delay domain among all the sampling instants.  Then the common sparsity of the channel coefficients among different antennas is illustrated. Finally, exploiting the relationship between the channel coefficients and the BEM coefficients, we prove the common sparsity of the BEM coefficients among different BEM orders and different antennas. \begin{lemm}[\cite{Cheng2013}]\label{lemm1}
The channel coefficients of the $n_B$-th $n_B \in [1, N_B]$ transmit-receive antenna pair $\{\widetilde {\mathbf{h}}^{(n_B)}_n\}{\in} {\mathbb{C}^{L \times 1}}$   have common sparsity among all the sampling instants $n \in [0,N-1]$, i.e., their nonzero positions are the same.
\end{lemm}

Assume that there are $L$ channel taps  for a  transmit-receive antenna pair and the index set of them is denoted  as  $[0,L-1]$. Conventionally,  there exist  $K$ $(K \ll L)$ nonzero taps, $\left\{ {l_1}, \cdots ,{l_K} \right\}\subset [0,L-1]$, which means that
 \begin{equation}
{\mathbf{h}}_l^{({n_B})}= {({h^{({n_B})}}[0,l], \cdots ,{h^{({n_B})}}[N - 1,l])^T}
 = {\mathbf{0}},
 \end{equation}
 for any $l \notin \left\{ {l_1}, \cdots ,{l_K} \right\}$.
It is not hard to find that $\{\widetilde {\mathbf{h}}^{(n_B)}_n\}$  $n \in [0,N-1]$ are all $K$-sparse vectors
 \begin{equation}
 \begin{split}
 {\widetilde {\mathbf{h}}^{(n_B)}_n} &= {( {h^{(n_B)}[n,0],  \cdots,  h^{(n_B)}[n,L - 1])}^T}\\
 &=(0, \cdots, {h^{({n_B})}}[n,{l_1}], \cdots, 0, \cdots, {h^{({n_B})}}[n,{l_K}], \cdots, 0)^T
\end{split}
 \end{equation}
 and their common nonzero positions in delay domain are $\left\{ {l_1}, \cdots, {l_K} \right\}$.
\begin{lemm}[\cite{Masood2015}]\label{lemm2}
In large-scale MIMO systems, all the transmit-receive antenna pairs share common sparsity in delay domain  if $\frac{{{s_{\max }}}}{C} \le \frac{1}{{10BW}}$, in which $s_{max}$ denotes the maximum distance  between any two  transmit antennas, $C$ is the speed of light and $BW$ is the signal bandwidth.
\end{lemm}

As referred in \cite{Masood2015}, if $\frac{{{s_{\max }}}}{C} \le \frac{1}{{10BW}}$, the links of all the transmit-receive antenna pairs  scatter invariantly in space. Thus the indices of their strong channel taps are the same,  i.e., $\left\{ {l_1}, \cdots ,{l_K} \right\}  \subset [0,L - 1]$. In another word, they share common sparsity in delay domain.
It is safe to assume that all the transmit-receive antenna pairs have the same nonzero positions of the channel taps in a large-scale MIMO system, which means that the elements of $\{\widetilde {\mathbf{h}}^{(n_B)}_n\}$, $n_B \in [1,N_B]$ have common sparsity among different antennas. As to the condition of $\frac{{{s_{\max }}}}{C} \le \frac{1}{{10BW}}$,  in the long term evolution (LTE) systems \cite{Rangan2014} with parameters of BW=20MHz and the center frequency of 2.6GHz, we have that the 10$\times$10 transmit antenna array has common channel support \cite{Masood2015}.

Combined Lemma \ref{lemm1} and Lemma \ref{lemm2} together, we conclude that the elements of $\{\widetilde {\mathbf{h}}^{(n_B)}_n\}$, $n_B \in [1,N_B]$, $n \in [0,N-1]$ have common sparsity among all the sampling instants and all the antennas. Now we will discuss the relationship of the BEM coefficients and the channel coefficients. For the convenience of elaboration, we ignore the modeling error and get
\begin{equation}
 {\mathbf{h}^{(n_B)} _l} \approx {\mathbf{V}}{{\bm{\theta}}^{(n_B)} _l}.
\end{equation}
The linear relationship reflects that
 ${{\bm{\theta}}^{(n_B)} _l}{=}\mathbf{0}$, $l {\notin} \left\{ {l_1}, \cdots ,{l_K} \right\}$,
 since
 ${{\mathbf{h}}^{(n_B)} _l}=\mathbf{0}$, $l \notin \left\{ {l_1}, \cdots ,{l_K} \right\}$.
 Apparently, the elements of  $\{{\widetilde {\bm{\theta}}^{(n_B)}_d}\}$,  $n_B \in [1,N_B]$, $d \in [0,D-1]$ have common sparsity among different BEM orders and different antennas.
\end{proof}

\subsection{The Proposed Pilot Pattern}

In this part, a novel pilot pattern is proposed for DS channels in large-scale MIMO systems, which combines the guard pilot design and the superimposed structure among different antennas to combat the ICI and reduce the pilot overhead.

\begin{figure}[t]
\centering
\includegraphics[scale=0.5]{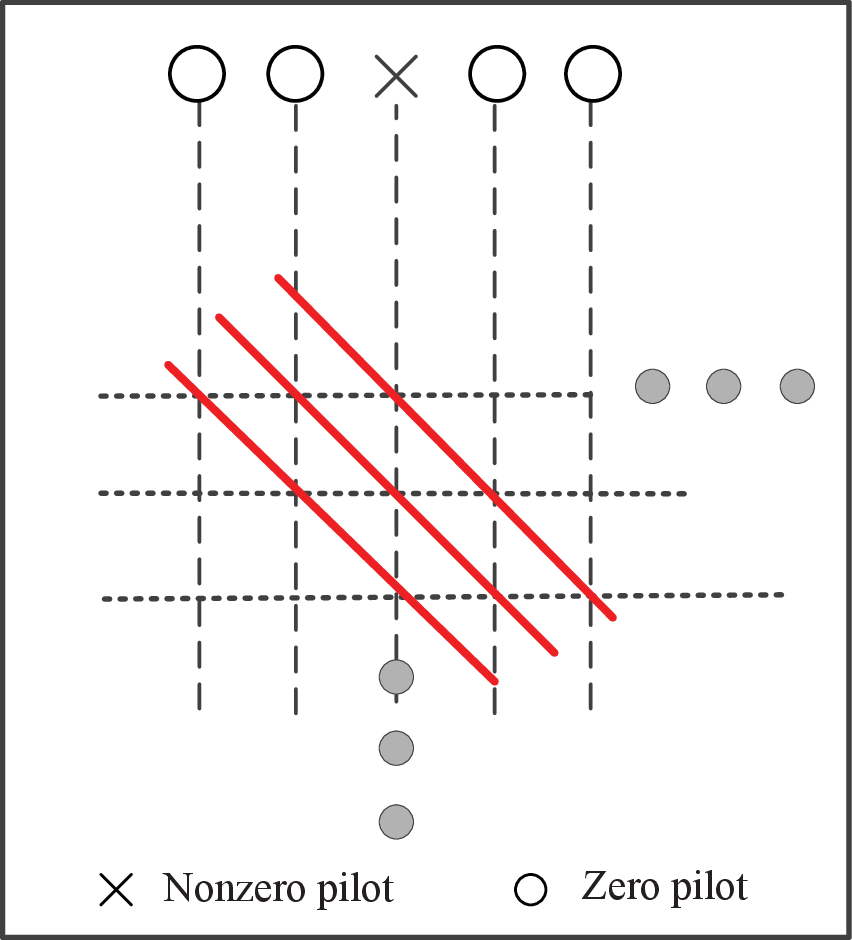}
\caption{ The diagram of CE-BEM.}
\label{0}
\end{figure}

Complex exponential basis expansion model (CE-BEM) is adopted due to its simple form. We have that the basis function  of the CE-BEM ${{\mathbf{v}}_d} = {(1, \cdots, {{e^{j\frac{{2\pi }}{N}n(d - \frac{{D - 1}}{2})}}}, \cdots, {{e^{j\frac{{2\pi }}{N}(N - 1)(d - \frac{{D - 1}}{2})}}})^T}$, $d \in [0, D-1]$, $j^2=-1$. In Fig. \ref{0}, we present the geometric expression of the CE-BEM for better illustration. A square, consisting of $N$ rows and $N$ columns, denotes the wireless channel between a transmit-receive antenna pair. The $N$ rows represent the $N$ transmitted subcarriers and the $N$ columns correspond to  the $N$ received subcarriers. The intersection of the $i$-th row and the $i$-th column means the channel coefficient of the $i$-th subcarrier, $i \in [1,N]$ while the  intersection of the $i$-th row and the $j$-th column means the interference to the $j$-th subcarrier from the $i$-th subcarrier, $j \in [1,N], j  \ne i$. Under the function of the CE-BEM with order $D$, the square is reduced to $D$ diagonals  as shown by the red lines in Fig. \ref{0}. To combat the ICI, which is represented by the $D-1$ subdiagonals, the nonzero pilots are always accompanied by several zero guard pilots on both sides. It is derived in \cite{Ma2003} that the optimal number of guard pilots on one side is $D-1$. Besides, in \cite{Tang2007}, we can see that only the central $D$ pilots including the nonzero one are not contaminated by the data subcarriers.
\begin{figure}[t]
\centering
\subfigure[A pilot pattern for DS channels in SISO systems.]{
\begin{minipage}[b]{0.5\textwidth}\label{A1}
\includegraphics[width=1\textwidth, height=1.5cm]{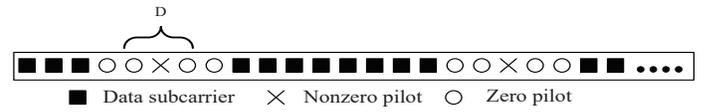}
\end{minipage}}

\subfigure[A superimposed pilot pattern for frequency selective channels in large-scale MIMO systems.]{
\begin{minipage}[b]{0.5\textwidth}\label{A2}
\includegraphics[width=1\textwidth,height=2cm]{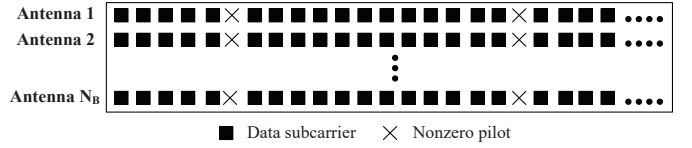}
\end{minipage}}

\subfigure[Our proposed pilot pattern for DS channels in large-scale MIMO systems.]{
\begin{minipage}[b]{0.5\textwidth}\label{B2}
\includegraphics[width=1\textwidth,height=2.5cm]{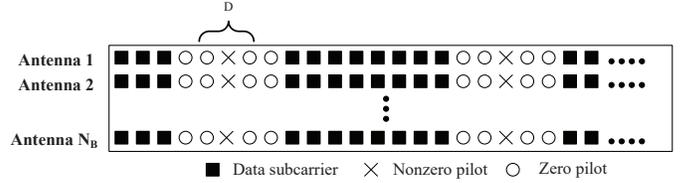}
\end{minipage}}

\subfigure[An orthogonal pilot pattern for DS channels in MIMO systems.]{
\begin{minipage}[b]{0.5\textwidth}\label{B1}
\includegraphics[width=1\textwidth,height=2.2cm]{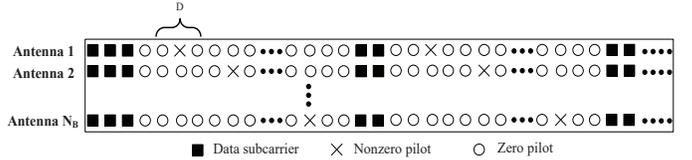}
\end{minipage}}

\caption{Comparison of different pilot patterns.}
\end{figure}

To clearly elaborate the motivation of our designed pilot pattern, we introduce two existing pilot patterns as shown in Fig. \ref{A1} and Fig. \ref{A2}.
\begin{itemize}
  \item In Fig. \ref{A1}, we present a pilot pattern proposed for DS channels in SISO system  \cite{Cheng2013}.  Assume that the pilots are arranged in $G$ groups. For each group, a nonzero pilot  is equipped with $D-1$  zero guard pilots on each side. It selects the $D$ central pilots of each group for channel estimation and obtains an ICI-free structure.
  \item Fig. \ref{A2} depicts a superimposed pilot pattern, which is proposed  for frequency selective channels in large-scale MIMO systems \cite{Gao2014}. The nonzero pilots occupy the same positions for each antenna to reduce the large pilot overhead brought by the increased number of antennas. The sequence of the nonzero pilots for each antenna  consists of random ${\pm 1}$ and  different antennas are distinguished by different sequences. In another word, it utilizes the differentiation of different antennas in code domain.
\end{itemize}
Considering that our pilot pattern is designed for DS channels in large-scale MIMO systems, we combines the above two properties together. As shown in Fig. \ref{B2}, the guard pilots are inserted in the superimposed structure. It gives the consideration to both the ICI avoidance and the reduction of the pilot overhead. To our best knowledge, little is concerned about the pilot pattern for DS channels in large-scale MIMO systems.  In the existing literature \cite{Tang2011}, the orthogonal pattern with guard pilots, as shown in Fig. \ref{B1},  is prepared for DS channels in MIMO systems.  It cannot support more antennas since the requirement of pilot subcarriers is too large.

\subsection{The Proposed Estimator}

In this part, we propose a BDCS based estimator for DS channels in large-scale MIMO systems by  properly extending the DCS based channel estimation scheme in SISO systems.

\subsubsection{Review of the DCS based estimator in \cite{Cheng2013}}

The work \cite{Cheng2013} proposed a DCS based estimator for DS channels in SISO systems. Its pilot pattern is depicted as Fig. \ref{A1}. The pilots are divided into $G$ groups and for each group,  the central $D$ pilots are selected for channel estimation to guarantee the ICI-free structure.  Let  $\mathcal{S}_d$,  $d \in [0,D-1]$, represent the index set of the $(d+1)$-th selected pilot subcarriers of all the groups. The index set of the nonzero pilots $\mathcal{S}_{\frac{{D - 1}}{2}}$ is also denoted as $\mathcal{S}_{cen}$. Thus we have
 \begin{equation}
  \begin{array}{*{20}{c}}
{{{\mathcal{S}}_0} = {{\mathcal{S}}_{cen}} - \frac{{D - 1}}{2}},\\
 \vdots \\
{{{\mathcal{S}}_{\frac{{D - 1}}{2}}} = {{\mathcal{S}}_{cen}}},\\
 \vdots \\
{{{\mathcal{S}}_{D - 1}} = {{\mathcal{S}}_{cen}} + \frac{{D + 1}}{2}}.
\end{array}
\end{equation}
Let ${\mathbf{P}}^{SISO} \in  {\mathbb{C}}^{G \times 1}$ denote the values of the nonzero pilots. The channel estimator, which describes the relationship between the received pilots and the CE-BEM coefficients, is derived as Lemma \ref{review}.
\begin{lemm}\label{review}
The CE-BEM coefficients $\tilde{\bm{\theta}}^{SISO}_d$, $d \in [0,D-1]$ can be obtained  by solving
\begin{equation}\label{28}
\left\{ {\begin{array}{*{20}{c}}
{{{\left[ {\mathbf{Y}^{SISO}} \right]}_{{{\mathcal S}_0}}} = {\tilde{\mathbf P}}^{SISO}{{\left[ {{{\mathbf{W}}_L}} \right]}_{{{\mathcal S}_{\frac{{D - 1}}{2}}}}}{\tilde{\bm{\theta}}^{SISO}_0} + {{\bm{\eta}}^{SISO}_0}}\\
 \vdots \\
{{{\left[ {\mathbf{Y}^{SISO}} \right]}_{{{\mathcal S}_{\frac{{D - 1}}{2}}}}} = {\tilde{\mathbf P}}^{SISO}{{\left[ {{{\mathbf{W}}_L}} \right]}_{{{\mathcal S}_{\frac{{D - 1}}{2}}}}}{\tilde{\bm{\theta}}^{SISO} _{\frac{{D - 1}}{2}}} + {{\bm{\eta}}^{SISO} _{\frac{{D - 1}}{2}}}}\\
 \vdots \\
{{{\left[ {\mathbf{Y}^{SISO}} \right]}_{{{\mathcal S}_{D - 1}}}} = {\tilde{\mathbf P}}^{SISO}{{\left[ {{{\mathbf{W}}_L}} \right]}_{{{\mathcal S}_{\frac{{D - 1}}{2}}}}}{\tilde{\bm{\theta}}^{SISO}_{D - 1}} + {{\bm{\eta}}^{SISO}_{D - 1}}}
\end{array}} \right.
\end{equation}
Here, ${\mathbf{Y}^{SISO}}$ represents the received signal and ${{\left[ {\mathbf{Y}^{SISO}} \right]}_{{{\mathcal S}_{d}}}}$ means the received pilot subcarriers, $d \in [0,D-1]$. ${\tilde{\mathbf P}}^{SISO}{=}diag{({\mathbf{P}}^{SISO})}$.
${\tilde{\bm{\theta}}^{SISO}_{d}}$ is the BEM coefficients with order $d$, $d \in [0,D-1]$. ${{\bm{\eta}}^{SISO}_{d}}$ is the noise term, $d \in [0,D-1]$.
\end{lemm}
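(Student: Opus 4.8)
The plan is to derive the system (\ref{28}) directly from the frequency-domain BEM representation (\ref{matfre}) specialized to the single-antenna case, by exploiting two structural facts: the banded inter-carrier-interference structure induced by the CE-BEM, and the isolation provided by the zero guard pilots. First I would write the received frequency-domain signal as $\mathbf{Y}^{SISO} = \mathbf{H}_f \mathbf{X} + \bm{\eta}$, where $\mathbf{X}$ is the transmitted vector that is nonzero only on the central pilot positions $\mathcal{S}_{cen}$ and zero on every guard-pilot position, and $\mathbf{H}_f = \sum_{d=0}^{D-1}\mathbf{V}_d \bm{\Theta}_d + \bm{\Delta}$ as in (\ref{matfre}).

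The first key step is to show that for the CE-BEM the matrix $\mathbf{V}_d = \mathbf{W}\,diag(\mathbf{v}_d)\mathbf{W}^H$ is a cyclic-shift (permutation) matrix that shifts its argument by $d - \frac{D-1}{2}$ subcarriers. This follows from a one-line DFT computation: because $\mathbf{v}_d$ is a pure complex exponential of frequency $d - \frac{D-1}{2}$, the entry $[\mathbf{V}_d]_{k,l}$ collapses to $1$ exactly when $k - l \equiv d - \frac{D-1}{2} \pmod N$ and to $0$ otherwise. Consequently $\mathbf{H}_f$ is supported on only $D$ diagonals, which is precisely the ICI structure sketched in Fig. \ref{0}, and since $\bm{\Theta}_d = diag(\sqrt N\,\mathbf{W}_L \tilde{\bm{\theta}}_d)$, each diagonal carries the contribution of exactly one BEM order.

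Next I would evaluate the received signal at the $d$-th selected pilot position $\mathcal{S}_d = \mathcal{S}_{cen} + (d - \frac{D-1}{2})$. Using the cyclic-shift action, the contribution of the order-$d'$ term to $[\mathbf{Y}^{SISO}]_{\mathcal{S}_d}$ is read off from $\mathbf{X}$ over the window $[\mathcal{S}_{cen} + d - (D-1),\ \mathcal{S}_{cen} + d]$, i.e. at the positions $\mathcal{S}_{cen} + (d - d')$ for $d' \in [0,D-1]$. The decisive observation is that this position coincides with the nonzero pilot $\mathcal{S}_{cen}$ only when $d' = d$; for every other $d'$ it lands on a zero guard pilot, since $|d - d'| \le D-1$ keeps it inside the $(D-1)$-wide guard band. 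Hence all cross-order terms vanish, the equations decouple, and one is left with $[\mathbf{Y}^{SISO}]_{\mathcal{S}_d} = \tilde{\mathbf{P}}^{SISO}[\mathbf{W}_L]_{\mathcal{S}_{cen}} \tilde{\bm{\theta}}^{SISO}_d + \bm{\eta}^{SISO}_d$ for each $d$, absorbing $\sqrt N$ and the permutation phase into the pilot and noise terms. Collecting these over the $G$ groups and over $d = 0,\ldots,D-1$ yields (\ref{28}).

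The main obstacle I anticipate is the careful bookkeeping that certifies the central $D$ pilots are simultaneously free of inter-order leakage and of contamination from data subcarriers. The inter-order decoupling needs the guard band wide enough on the inner side, while ICI-freeness from data requires that the window $[\mathcal{S}_d - \frac{D-1}{2},\ \mathcal{S}_d + \frac{D-1}{2}]$ never reach a data subcarrier; one checks that for $d \in [0,D-1]$ this window stays inside $[\mathcal{S}_{cen}-(D-1),\ \mathcal{S}_{cen}+(D-1)]$, which is exactly why the optimal guard length is $D-1$ per side (as in \cite{Ma2003, Tang2007}) and why only the central $D$ pilots of each group, rather than all of them, can be retained. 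Verifying these edge cases, together with folding the modeling-error matrix $\bm{\Delta}$ into the aggregate noise $\bm{\eta}^{SISO}_d$, is the delicate part; the remainder is the routine cyclic-shift algebra above.
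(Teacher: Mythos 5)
Your proposal is correct and follows essentially the same route as the paper's own proof: both expand ${\mathbf{Y}}^{SISO}$ through the frequency-domain CE-BEM representation (\ref{matfre}), use the fact that for the CE-BEM the matrix ${\mathbf{V}}_d$ acts as a cyclic shift by $d-\frac{D-1}{2}$ (written ${\mathbf{I}}_N^{<d-\frac{D-1}{2}>}$ in the paper), and then invoke the guard-pilot structure so that the order-$d'$ contribution observed on ${\mathcal S}_d$ lands on a zero guard pilot unless $d'=d$, which is exactly the paper's key identity (\ref{32}). The only differences are matters of explicitness: you derive the shift property by the one-line DFT computation and do the window bookkeeping that the paper asserts, and your decoupling condition $d'=d$ is the evidently intended (correct) form of the condition in (\ref{32}), which the paper misprints as $d=d'=\frac{D-1}{2}$.
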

\begin{proof}
The proof is summarized  briefly as follows. The received signal in frequency domain is
\begin{equation}
{\mathbf{Y}^{SISO}} = {\mathbf{H}^{SISO}_f}{\mathbf{S}^{SISO}} + {\bm{\Upsilon}}^{SISO},
\end{equation}
in which, ${\mathbf{S}^{SISO}}$ is the transmitted signal and  ${\bm{\Upsilon}}^{SISO}$ is the noise term.
Combining with the formula (\ref{matfre}), the CE-BEM decomposition of the received signal is derived as
\begin{equation}
\begin{split}
{\mathbf{Y}^{SISO}} &= \left( \sum\limits_{d = 0}^{D - 1} {{{\mathbf{V}}_d}{{\bm{\Theta}}^{SISO}_d}} \right){\mathbf{S}^{SISO}} + {\bm{\Upsilon}}^{SISO}    \\
&= \left( {\sum\limits_{d = 0}^{D - 1} {\mathbf{I}_N^{ < d - \frac{{D - 1}}{2} > }{{\tilde{\mathbf S}}^{SISO}}{\mathbf{W}_L}{{\tilde{\bm{\theta}}^{SISO}_{d}}}} } \right) + {\bm{\Upsilon}}^{SISO},
\end{split}
\end{equation}
in which, ${\tilde{\mathbf{S}}^{SISO}}{=}diag{({{\mathbf{S}}^{SISO}} )}$ and $\mathbf{I}_N^{ < d - \frac{{D - 1}}{2} >}$ means that the identity matrix with order $N$ shifts down circularly for $d - \frac{{D - 1}}{2}$ rows.

Assume that ${\mathbf{\Psi}}_{d^{'}}=[\mathbf{I}_N]_{{{\mathcal S}_{d^{'}}}}$, $d^{'} \in [0,D-1]$, the received pilots can be expressed as
\begin{equation}\label{31}
[{\mathbf{Y}^{SISO}}]_{{{\mathcal S}_{d^{'}}}}{=} {\mathbf{\Psi}}_{d^{'}}\left( {\sum\limits_{d = 0}^{D - 1} {\mathbf{I}_N^{ < d - \frac{{D - 1}}{2} > }{{\tilde{\mathbf S}}^{SISO}}{\mathbf{W}_L}{{\tilde{\bm{\theta}}^{SISO}_{d}}}} } \right) + {{\bm{\eta}}^{SISO}_{d^{'}}}
\end{equation}
in which,
\begin{equation}\label{32}
{\mathbf{\Psi}_{d^{'}}}{\mathbf{I}_N^{ < d - \frac{{D - 1}}{2} >} }{{\tilde{\mathbf S}}^{SISO}} = \left\{ {\begin{array}{*{20}{c}}
{{\tilde{\mathbf P}}^{SISO}}{[\mathbf{I}_N]_{{\mathcal{S}}_{cen}}}\\
{\mathbf{0}}
\end{array}} \right.{\rm{ }}\begin{array}{*{20}{c}}
{d = d^{'}=\frac{{D - 1}}{2}} \\
{else}
\end{array}.
\end{equation}
Substituting (\ref{32}) into (\ref{31}), it is not hard to obtain (\ref{28}).
\end{proof}

Thus, it is found  that
\begin{equation}\label{dd}
\begin{split}
&\left[
{{{[{{\mathbf{Y}}^{SISO}}]}_{{{\mathcal S}_0}}}} \cdots {{{[{{\mathbf{Y}}^{SISO}}]}_{{{\mathcal S}_{\frac{{_{D - 1}}}{2}}}}}} \cdots {{{[{{\mathbf{Y}}^{SISO}}]}_{{{\mathcal S}_{D - 1}}}}} \right]  \\
&={{\tilde{\mathbf{P}}}^{SISO}}{[{{\mathbf{W}}_L}]_{{{\mathcal S}_{\frac{{_{D - 1}}}{2}}}}}\left[ {
{{\tilde {\bm{\theta }}}_0^{SISO}} \cdots {{\tilde {\bm{\theta }}}_{\frac{{_{D - 1}}}{2}}^{SISO}} \cdots {{\tilde {\bm{\theta }}}_{D - 1}^{SISO}}} \right]  \\
&+\left[ {
{{\bm{\eta }}_0^{SISO}} \cdots {{\bm{\eta }}_{\frac{{_{D - 1}}}{2}}^{SISO}} \cdots {{\bm{\eta }}_{D - 1}^{SISO}}} \right].
\end{split}
\end{equation}
As analyzed above, the elements of $\{{\tilde {\bm{\theta }}}_{d}^{SISO}\}$, $d \in [0, D-1]$ have common sparsity among different BEM orders.  Thus (\ref{dd}) is a typical DCS problem and can be solved by the SOMP algorithm.

\subsubsection{The proposed BDCS based estimator}

Firstly we introduce a key finding that the CE-BEM coefficients present both the common sparsity and the block sparsity. Then exploiting the structured sparsity,  we propose a BDCS based estimator for DS channel estimation in large-scale MIMO systems.

The pilot pattern of the large-scale MIMO system is described in Fig. \ref{B2}. For each transmit antenna, the positions of the pilot subcarriers are arranged as the same way as the SISO systems. Besides, among different antennas, we adopt the superimposed structure, in which, the pilot subcarriers of all the antennas have the same indexes.

Let $\mathbf{P}^{(n_B)} \in {\mathbb{C}}^{G \times 1}$ represent the pilot values of the $n_B$-th antenna, $n_B \in [1,N_B]$. Considering that the received signal is the addition of the signals transmitted by all the antennas on the base station, we derive the relationship between the received pilots and the CE-BEM coefficients of all the transmit-receive antenna pairs as Theorem \ref{theo2}.
\begin{theo}\label{theo2}
The estimation of the CE-BEM coefficients of all the transmit-receive antenna pairs $\{{\tilde{\bm \theta }}_{d}^{({n_B})}\}$, $d {\in} [0,D-1]$, $n_B  {\in} [1,N_B]$, is conducted by (\ref{con2}), in which, ${{\tilde{\mathbf P}}^{({n_B})}} {=} diag{( {{\mathbf{P}}^{({n_B})}})}$.
\end{theo}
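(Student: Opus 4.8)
The plan is to replay the derivation of Lemma~\ref{review}, now accounting for the fact that the base station receives the superposition of the signals radiated by all $N_B$ transmit antennas. First I would write the received frequency-domain vector as
\[ \mathbf{Y} = \sum_{n_B=1}^{N_B} \mathbf{H}_f^{(n_B)} \mathbf{S}^{(n_B)} + \bm{\Upsilon}, \]
where $\mathbf{S}^{(n_B)}$ carries the superimposed pilots of the $n_B$-th antenna and $\bm{\Upsilon}$ is the noise. Substituting the CE-BEM decomposition (\ref{matfre}) of each $\mathbf{H}_f^{(n_B)}$ and repeating the manipulation that produced the shifted-identity expansion in (\ref{31}), the contribution of antenna $n_B$ becomes $\sum_{d=0}^{D-1}\mathbf{I}_N^{<d-\frac{D-1}{2}>}\tilde{\mathbf{S}}^{(n_B)}\mathbf{W}_L\tilde{\bm{\theta}}_d^{(n_B)}$ with $\tilde{\mathbf{S}}^{(n_B)}=diag(\mathbf{S}^{(n_B)})$, so $\mathbf{Y}$ is a double sum over antennas and BEM orders.

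Next I would restrict to the selected pilot subcarriers by left-multiplying with $\bm{\Psi}_{d'}=[\mathbf{I}_N]_{\mathcal{S}_{d'}}$. The key observation is that the superimposed pattern of Fig.~\ref{B2} places the nonzero pilots of every antenna at the common index set $\mathcal{S}_{cen}$ and flanks them with $D-1$ guard pilots on each side, exactly as in the SISO pattern. Consequently the ICI-free selection identity (\ref{32}) applies to each antenna separately: $\bm{\Psi}_{d'}\mathbf{I}_N^{<d-\frac{D-1}{2}>}\tilde{\mathbf{S}}^{(n_B)}$ equals $\tilde{\mathbf{P}}^{(n_B)}[\mathbf{I}_N]_{\mathcal{S}_{cen}}$ when $d=d'=\frac{D-1}{2}$ and vanishes otherwise, with $\tilde{\mathbf{P}}^{(n_B)}=diag(\mathbf{P}^{(n_B)})$. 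Since all antennas occupy the same pilot positions and are distinguished only by their pilot sequences $\mathbf{P}^{(n_B)}$, the selection introduces no additional cross-carrier terms and the antenna contributions superimpose linearly, so the double sum collapses to
\[ [\mathbf{Y}]_{\mathcal{S}_{d'}} = \sum_{n_B=1}^{N_B}\tilde{\mathbf{P}}^{(n_B)}[\mathbf{W}_L]_{\mathcal{S}_{\frac{D-1}{2}}}\tilde{\bm{\theta}}_{d'}^{(n_B)} + \bm{\eta}_{d'} \]
for each $d'\in[0,D-1]$, which is precisely (\ref{con2}).

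Finally I would read off the structured sparsity that makes (\ref{con2}) a BDCS problem. Stacking the $D$ orders as columns yields a DCS form whose common support across columns follows from the common sparsity among BEM orders, while grouping, within each column, the $N_B$ antenna coefficients of each delay tap into a length-$N_B$ block converts the common sparsity among antennas (Theorem~1) into block sparsity, since a tap is active either for all antennas or for none. The step I expect to be the main obstacle is justifying this ICI-free reduction under the superimposed structure, i.e.\ confirming that simultaneous transmission by all antennas on the shared pilot positions, together with the $D-1$ guard pilots, leaves the selected central pilots free of inter-carrier and inter-antenna contamination so that property (\ref{32}) transfers to each antenna and the contributions remain additive. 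Once this is secured, the remainder is a routine rearrangement of the SISO argument.
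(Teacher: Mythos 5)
Your proposal follows essentially the same route as the paper's own proof: write the received signal as the superposition $\sum_{n_B}\mathbf{H}_f^{(n_B)}\mathbf{S}^{(n_B)}+\bm{\Upsilon}$, substitute the CE-BEM shifted-identity expansion for each antenna, left-multiply by $\mathbf{\Psi}_{d'}$, and apply the SISO selection identity (\ref{32}) antenna-by-antenna (valid because all antennas share the same pilot positions and guard structure), so the double sum over $(n_B,d)$ collapses to (\ref{con2}). The only caveat, which you inherit verbatim from the paper, is that (\ref{32}) must be read as holding whenever $d=d'$ (with value $\tilde{\mathbf{P}}^{(n_B)}[\mathbf{I}_N]_{\mathcal{S}_{cen}}$) rather than only when $d=d'=\frac{D-1}{2}$; otherwise the rows of (\ref{con2}) with $d'\neq\frac{D-1}{2}$ could not arise, so this is a typo in the cited identity rather than a gap in your argument.
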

\begin{figure*}
\begin{equation}\label{con2}
\left\{ {\begin{array}{*{20}{c}}
{{{\left[ {\mathbf Y} \right]}_{{ {\mathcal S}_0}}} = \left[ {
{{\tilde {\mathbf P}}^{{(1)}}  {{[{\mathbf{W}_L}]}_{{{\mathcal S}_{\frac{{D - 1}}{2}}}}}} \cdots {{\tilde {\mathbf P}}^{{(N_B)}}{{[{\mathbf{W}_L}]}_{{{\mathcal S}_{\frac{{D - 1}}{2}}}}}}
} \right]{\left[{
{{\tilde{\bm{\theta}}} {{_0^{(1)}}^T}}} \cdots{
{{\tilde{\bm{\theta}}} {{_0^{(N_B)}}^T}}}\right]}^T} + {{\bm \eta} _0}\\
 \vdots \\
{{{\left[ {\mathbf Y} \right]}_{{{\mathcal S}_{\frac{{D - 1}}{2}}}}} = \left[ {
{{\tilde {\mathbf P}}^{{(1)}}{{[{{\mathbf{W}_L}}]}_{{{\mathcal S}_{\frac{{D - 1}}{2}}}}}} \cdots {{\tilde {\mathbf P}}^{{(N_B)}}{{[{{\mathbf{W}_L}}]}_{{{\mathcal S}_{\frac{{D - 1}}{2}}}}}}
} \right]{{\left[ {
{{{\tilde{\bm{\theta}}} {{_{\frac{{D - 1}}{2}}^{(1)}}^T}}}
\cdots {{{\tilde{\bm{\theta}}} {{_{\frac{{D - 1}}{2}}^{(N_B)}}^T}}}} \right]}^T} + {{\bm \eta}_{\frac{{D - 1}}{2}}}}\\
 \vdots \\
{{{\left[ {\mathbf Y} \right]}_{{{\mathcal S}_{D - 1}}}} = \left[{
{{\tilde {\mathbf P}}^{{(1)}}{{[{{\mathbf{W}_L}}]}_{{{\mathcal S}_{\frac{{D - 1}}{2}}}}}} \cdots{{\tilde {\mathbf P}}^{{(N_B)}}{{[{{\mathbf{W}_L}}]}_{{{\mathcal S}_{\frac{{D - 1}}{2}}}}}}} \right]{{\left[ {
{{{\tilde{\bm{\theta}}} {{_{D-1}^{(1)}}^T}}} \cdots{{\tilde{\bm{\theta}}} {{_{D-1}^{(N_B)}}^T}}} \right]}^T} + {{\bm \eta} _{D-1}}}
\end{array}} \right.
\end{equation}
\end{figure*}
\begin{proof}
Considering that the transmitted signals of different antennas overlap with each other in the air, we have that the received signal is
\begin{equation}
\begin{split}
{\mathbf{Y}} &= \sum\limits_{{n_B} = 1}^{{N_B}} {{\mathbf{H}}_f^{({n_B})}{{\mathbf{S}}^{({n_B})}} + \bm{\Upsilon} } \\
 &= \sum\limits_{{n_B} = 1}^{{N_B}} {\left( {\sum\limits_{d = 0}^{D - 1} {{{\mathbf{V}}_d}{\bm \Theta} _d^{({n_B})}} } \right){{\mathbf{S}}^{({n_B})}} + \bm{\Upsilon} } \\
 &= \sum\limits_{{n_B} = 1}^{{N_B}} {\left( {\sum\limits_{d = 0}^{D - 1} {{\mathbf{I}}_N^{ < d - \frac{{D - 1}}{2} > }{{\tilde {\mathbf{S}}}^{({n_B})}}{{\mathbf{W}}_L}
 {{\tilde {\bm \theta}}_d^{({n_B})}}} } \right) + \bm{\Upsilon} },
\end{split}
\end{equation}
in which, ${\mathbf{S}}^{(n_B)}$ is the transmitted signal of the $n_B$-th antenna,  ${{\tilde{\mathbf S}}^{({n_B})}} {=} diag{( {{\mathbf{S}}^{({n_B})}})}$,  $n_B \in [1,N_B]$, and ${\bm{\Upsilon}}$ is the noise term.

Then the pilot subcarriers are selected as
\begin{equation}
\begin{split}
{[{\mathbf{Y}}]_{{S_{{d^{'}}}}}} &= {{\mathbf{\Psi }}_{{d^{'}}}}\sum\limits_{{n_B} = 1}^{{N_B}} {\left( {\sum\limits_{d = 0}^{D - 1} {{\mathbf{I}}_N^{ < d - \frac{{D - 1}}{2} > }{{\tilde {\mathbf{S}}}^{({n_B})}}{{\mathbf{W}}_L}\tilde {\bm \theta} _d^{({n_B})}} } \right) + {{\bm \eta} _{{d^{'}}}}} \\
 &= \sum\limits_{{n_B} = 1}^{{N_B}} {{{\mathbf{\Psi }}_{{d^{'}}}}\left( {\sum\limits_{d = 0}^{D - 1} {{\mathbf{I}}_N^{ < d - \frac{{D - 1}}{2} > }{{\tilde {\mathbf{S}}}^{({n_B})}}{{\mathbf{W}}_L}\tilde {\bm \theta} _d^{({n_B})}} } \right) + {{\bm \eta} _{{d^{'}}}}} \\
 &= \sum\limits_{{n_B} = 1}^{{N_B}} {{{\tilde {\mathbf{P}}}^{({n_B})}}{{[{{\mathbf{W}}_L}]}_{{S_{\frac{{D - 1}}{2}}}}}\tilde {\bm \theta} _{{d^{'}}}^{({n_B})} + {{\bm \eta} _{{d^{'}}}}},
\end{split}
\end{equation}
in which $d^{'} \in [0,D-1]$. Thus  (\ref{con2}) is obtained as the complete expression.
\end{proof}
\begin{figure}[!t]
\centering
\includegraphics[scale=0.4]{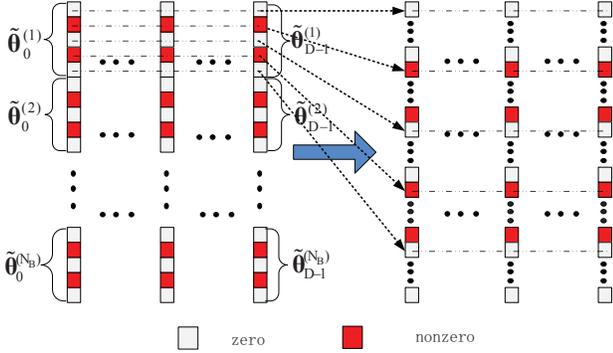}
\caption{The diagram of the common and block sparsity with $L=5$ and $K=2$.}
\label{bspa}
\end{figure}

For convenience of observation, we reshape (\ref{con2}) in a compact form as
\begin{equation}\label{compact}
\begin{split}
&\left[ {{{{[{\mathbf{Y}}]}_{{{\mathcal S}_0}}}} \cdots {{{[{\mathbf{Y}}]}_{{{\mathcal S}_{\frac{{D - 1}}{2}}}}}} \cdots {{{[{\mathbf{Y}}]}_{{{\mathcal S}_{D - 1}}}}}} \right] = {\mathbf{Z}}{\mathbf{\Lambda }}\\
&+ \left[ {{{{[{\bm \eta} ]}_{{{\mathcal S}_0}}}} \cdots {{{[{\bm \eta} ]}_{{{\mathcal S}_{\frac{{D - 1}}{2}}}}}} \cdots {{{[{\bm \eta} ]}_{{{\mathcal S}_{D - 1}}}}}} \right],
\end{split}
\end{equation}
in which,
\begin{equation}\label{38}
{\mathbf{Z}} = \left[ {{{\tilde {\mathbf{P}}}^{(1)}}{{[{{\mathbf{W}}_L}]}_{{{\mathcal S}_{\frac{{D - 1}}{2}}}}} \cdots {{\tilde {\mathbf{P}}}^{({N_B})}}{{[{{\mathbf{W}}_L}]}_{{{\mathcal S}_{\frac{{D - 1}}{2}}}}}} \right],
\end{equation}
and
\begin{equation}
{\mathbf{\Lambda }}=\left[ {\begin{array}{*{20}{c}}
{\tilde {\bm \theta} _0^{(1)}}& \cdots &{\tilde {\bm \theta} _{D - 1}^{(1)}}\\
 \vdots & \ddots & \vdots \\
{\tilde {\bm \theta} _0^{({N_B})}}& \cdots &{\tilde {\bm \theta} _{D - 1}^{({N_B})}}
\end{array}} \right].
\end{equation}
For better illustration, Fig. \ref{bspa} depicts an example of the CE-BEM coefficient matrix ${\mathbf{\Lambda}}$ with the channel length $L=5$ and the sparsity $K=2$. It is obvious that ${\mathbf{\Lambda}}$ has common sparsity among its columns. Besides, when we adjust the rows of ${\mathbf{\Lambda}}$ in the order of the channel tap, it presents the nonzero blocks and the zero blocks, referred to as the block sparsity. It can be concluded that the CE-BEM coefficient matrix ${\mathbf{\Lambda}}$ have both common and block sparsity since the elements of $\{{\tilde {\bm \theta} _{d}^{(n_B)}}\}$, $d \in [0,D-1]$, $n_B \in [1,N_B]$ have common sparsity not only  among different CE-BEM orders, but also among different antennas.
Now we can see that the proposed estimator (\ref{con2}) attributes to a BDCS problem, which can be solved by SSP or BSOMP algorithms.

 The work \cite{Dai2013} proposed an estimator for DS channels in large-scale MIMO systems as well, which can only be performed in time frequency training (TFT)-OFDM systems.  It utilizes the correlation of the  Pseudo-random sequence for the nonzero elements localization and is not suitable for  the conventional cyclic prefix (CP)-OFDM.

\subsection{The Proposed Pilot Design Algorithm}

In this part, we propose a novel pilot design algorithm to optimize the pilot positions ${\mathcal S}_{cen}$ for  better performance of estimation. Considering the block structure, it decreases the coherence among different blocks of the measurement matrix, which is different from the MCP of the pilot design algorithm for SISO systems.

As analyzed above, the CE-BEM coefficient matrix ${\mathbf{\Lambda }}$ presents block sparsity when its rows are arranged in the order of the channel tap. For more accurate localization, the columns of the measurement matrix ${\mathbf{Z}}$ are also arranged in the
order of the channel tap accordingly. Assume that ${\mathcal A}_l=\{l+1,\, l+1+L,\, l+1+2L,\,\cdots, l+1+(N_B-1)L\}$, $l \in [0,L-1]$, and we have
\begin{equation}\label{40}
{{\mathbf{Z}}_{a}} = \left[ {
{{{[{{\mathbf{Z}}^T}]}_{{{\mathcal A}_0}}}^T},{{{[{{\mathbf{Z}}^T}]}_{{{\mathcal A}_1}}}^T,} \cdots, {{{[{{\mathbf{Z}}^T}]}_{{{\mathcal A}_{L-1}}}}^T}} \right].
\end{equation}
Apparently, $\mathbf{Z}_a$ consists of $L$ blocks and the coherence among them affects the localization of the nonzero blocks. Assume that
\begin{equation}\label{41}
{{\mathbf{Z}}_s} = \left[ {
{vec({{[{{\mathbf{Z}}^T}]}_{{{\mathcal A}_0}}}^T)},{vec({{[{{\mathbf{Z}}^T}]}_{{{\mathcal A}_1}}}^T)}, \cdots, {vec({{[{{\mathbf{Z}}^T}]}_{{{\mathcal A}_{L-1}}}}^T)}
} \right],
\end{equation}
and $\mu(\mathbf{Z}_s)$ takes on the reference value of the coherence among difference blocks. In our proposed Algorithm \ref{AA1}, referred to as BDSO, we search for a proper nonzero pilot positions  ${\mathcal S}_{\frac{D-1}{2}}$ in (\ref{38}) with a small $\mu(\mathbf{Z}_s)$ by the iteration process.  The detailed steps are summarized as follows.

\begin{algorithm}[t]
\caption{ The proposed pilot design  algorithm: BDSO}
\label{AA1}
\textbf{Initialization}:
\begin{algorithmic}[1]

\STATE
 Generate  random $\pm 1$ sequences for ${\mathbf{P}}^{(n_B)} \in {\mathbb{Z}^{G \times 1}}$, $n_B \in [1,N_B]$.
\STATE
 Generate equidistant indexes $\widetilde{\mathcal{S}}_{cen}^{(0)}=\{{\widetilde{s}}^{(0)}_1, {\widetilde{s}}^{(0)}_2, \ldots, {\widetilde{s}}^{(0)}_G \}$, which satisfies ${\widetilde{s}}^{(0)}_u \in [1,N], \left| {{\widetilde{s}}_u^{(0)} - {\widetilde{s}}_v^{(0)}} \right| {\ge} 2D-1$, $u, v \in [1,G]$.
\STATE
${{\bm{\rho }}_0} = {\mathbf{0}}$,  ${{{\rho }}_{0,0}} = 1$,   $i=j=0$, and ${\mathcal{S}}_{cen}^{(0)} = \widetilde{\mathcal{S}}_{cen}^{(0)}$.
\end{algorithmic}
\emph{for} $m = 1, \ldots, Iter$
\begin{enumerate}[1)]
\item \textbf{Reformulation}

 Obtain $\bar {\mathcal S}_{cen}^{(m - 1)}$ by changing an element of $\widetilde {\mathcal{S}}_{cen}^{(m-1)}$, $\left| {\bar s_u^{(m - 1)} - \bar s_v^{(m - 1)}} \right| \ge 2D - 1$ and then reformulate the measurement matrix ${{\bar{{\mathbf {Z}} }}^{(m - 1)}}$ and ${{\widetilde{{\mathbf {Z}} }}^{(m - 1)}}$ according to $\bar {\mathcal S}_{cen}^{(m - 1)}$ and $\widetilde {\mathcal S}_{cen}^{(m - 1)}$ as (\ref{38}).

\item \textbf{Conversion}

 Convert ${{\bar{{\mathbf {Z}} }}^{(m - 1)}}$ and ${{\widetilde{\mathbf{{ {Z}} }}}^{(m - 1)}}$ to ${{\bar{{\mathbf {Z}} }}_s^{(m - 1)}}$ and ${{\widetilde{\mathbf{{ {Z}} }}}_s^{(m - 1)}}$ as (\ref{40}) and (\ref{41}).

\item \textbf{Calculation}

\emph{if} ${\mu}(\bar {\mathbf{Z}}^{(m-1)}_{s})  < {\mu}({\widetilde{\mathbf{{Z} }}}^{(m-1)}_{s})$,

    \quad set $\widetilde {\mathcal{S}}_{cen}^{(m)}=\bar {\mathcal S}_{cen}^{(m - 1)}$, $i=m+1$;

\emph{else}

     \quad set  $\widetilde {\mathcal{S}}_{cen}^{(m)}= \widetilde {\mathcal{S}}_{cen}^{(m-1)}$.

\emph{end}
\item \textbf{Probability}

${{\bm{\rho }}_m} = {{\bm{\rho }}_{m - 1}} + \frac{1}{m}({\mathbf{T}_i} - {{\bm{\rho }}_{m - 1}})$, in which ${\mathbf{T}_i}$ is a zero vector except that the $i$-th element is 1.
\item \textbf{Update}

\emph{if} ${{\rho }_{m,i}} > {{\rho }_{m,j}}$,

\quad ${\mathcal{S}}_{cen}^{(m)} = \widetilde{\mathcal{S}}_{cen}^{(m)}$, $j=i$;

\emph{else}

\quad ${\mathcal{S}}_{cen}^{(m)} =  {\mathcal{S}}_{cen}^{(m - 1)}$.

\emph{end}
\end{enumerate}
\emph{end}
\end{algorithm}

We initialize the values of the nonzero pilots ${\mathbf{P}}^{(n_B)} \in {\mathbb{Z}^{G \times 1}}$, $n_B \in [1,N_B]$ with random sequences  consisting of $\pm 1$, and then generate  ${\widetilde {\mathcal{S}}_{cen}}=\{{\widetilde{s}}_1, {\widetilde{s}}_2, \ldots, {\widetilde{s}}_G \}$  randomly, which satisfies $\left| {{\widetilde{s}}_u - {\widetilde{s}}_v} \right| \ge 2D - 1, {\widetilde{s}}_u, {\widetilde{s}}_v \in [0,N-1], u, v \in [1,G]$. The condition guarantees the groups of pilots do not overlap with each other.
In the reformulation stage, ${{\bar{\mathcal{S}}}_{cen}}$ is obtained by changing an element of ${\widetilde {\mathcal{S}}_{cen}}$ and  ${\bar{\mathbf{Z}}}$ and ${\widetilde{\mathbf{Z}}}$  are reformulated accordingly as (\ref{38}). Then we convert the form of the matrix   ${\bar{\mathbf{Z}}}$ and ${\widetilde{\mathbf{Z}}}$ to ${\bar{\mathbf{Z}}}_{s}$  and ${\widetilde{\mathbf{Z}}}_{s}$  as (\ref{40}) and (\ref{41}) for the convenience of calculating the coherence. Thus the ${\widetilde {\mathcal{S}}_{cen}}$ is updated  according to  the  comparison between ${{\mu}}(\bar {\mathbf{Z}}_{s})$ and  ${{\mu}}({\widetilde{\mathbf{Z}}}_{s})$. Finally,  in the state transition stage,  the probability of each state is calculated. Comparing the current and the last states, the ${\mathcal{S}_{cen}}$ which corresponds to the larger probability is selected.  As the iteration process converges,   we obtain the optimized ${\mathcal{S}_{cen}}$ with a steady solution.

In contrast to the pilot design algorithm \cite{Cheng2013} proposed for the SISO systems, referred to as  DSO, we pay attention to the coherence among different blocks of the measurement matrix instead of the columns, which avoids a large amount of unnecessary search. The works \cite{Qi2015} and \cite{He2015} proposed the pilot design algorithms for MIMO systems as well, which utilized the genetic and the iterative search to optimize the pilot positions. However, they were designed for the orthogonal pilot patterns  and  assigned different pilot positions to different antennas. Thus, a large number of pilot subcarriers were required and the estimator could only support a maximum of 4 antennas.

\subsection{Linear Smoothing}

In order to relieve the performance deterioration brought by the modeling error of the CE-BEM, our previous work \cite{Qin2016} proposed a linear smoothing method for SISO systems. In specific, we approximate the channel coefficients corresponding to the $N$ time instants with a straight line. Here we extend it to large-scale MIMO systems by repeating the smoothing process for each transmit-receive antenna pair.
\begin{itemize}
  \item Select the indices of the nonzero taps $\left\{ {l_1}, \cdots ,{l_K} \right\}$ in $[0,L-1]$ by comparison of  the  channel coefficients ${\mathbf{h}}_{l}^{({n_B})}$, $l \in [0,L-1]$, $n_B \in [1,N_B]$.
  \item Obtain the estimated channel coefficients corresponding to the two central instants of each $l_k$, $k \in [1,K]$.
\begin{align}
&\hat h^{({n_B})}[\frac{N}{4}-1,l_k] {\approx} \frac{2}{N}(\sum\limits_{n = 0}^{N/2 - 1} {h^{({n_B})}[n,{l_k}]} ),    \notag \\
&\hat h^{({n_B})}[\frac{3N}{4}-1,{l_k}] {\approx} \frac{2}{N}(\sum\limits_{n = N/2}^{N-1} {h^{({n_B})}[n,{l_k}]} ),    \notag\\
&l_k \in \left\{ {l_1}, \cdots ,{l_K} \right\}, n_B \in [1,N_B].   \notag
\end{align}
  \item Calculate the slope of the approximate line determined by the two central points of each ${\mathbf{h}}_{l_k}^{({n_B})}$:
\begin{align}
&\beta _{{l_k}}^{({n_B})} = \frac{{\hat h^{({n_B})}[\frac{N}{4}-1,{{l_k}}] - \hat h^{({n_B})}[\frac{3N}{4}-1,{{l_k}}]}}{{N/2}}, \notag  \\
&l_k \in \left\{ {l_1}, \cdots ,{l_K} \right\}, n_B \in [1,N_B].  \notag
\end{align}
  \item The processed channel can be expressed as
\begin{align}
&\hat{h}^{({n_B})}[n,{l_k}] = (n + 1 - \frac{N}{4})\beta _{{l_k}}^{({n_B})} + \hat h^{({n_B})}[\frac{N}{4}-1,{l_k}],   \notag \\
&l_k \in \left\{ {l_1}, \cdots ,{l_K} \right\}, n_B \in [1,N_B], n \in [0,N-1].    \notag
\end{align}
\end{itemize}

\emph{Remark:}
As described above, we mainly focus on the discussion about the downlink DS channel estimation in the large-scale systems, which  is a challenging problem involving the IAI. For the integrity of the work, the uplink DS channel estimation in the large-scale MIMO systems is briefly introduced,  although it is relatively simple and not concerned with the IAI. The estimator can be expressed as (\ref{up}), in which, ${{\mathbf{Y}}^{'}}^{(n_B)}$ represents the signal received by the $n_B$-th antenna on the base station, ${\tilde {\mathbf{P}}}^{'}{=}diag {({{\mathbf{P}}^{'}})}$,  ${{\mathbf{P}}^{'}} {\in} {\mathbb{Z}}^{G \times 1}$ denotes the pilot sequence, ${{{\tilde {\bm \theta }}}^{'}}{_{d}^{(n_B)}}$ is the CE-BEM coefficients and ${{{\tilde {\bm \eta }}}^{'}}{_{d}^{(n_B)}}$ is the noise term, $d \in [0,D-1]$, $n_B \in [1,N_B]$. It is found that the CE-BEM coefficients   presents  common sparsity while without block sparsity, which is different from the downlink channel estimator. The uplink channel estimator (\ref{up}) attributes to a DCS problem, while the downlink channel estimator (\ref{con2}) is a BDCS problem.  Thus the recovery algorithm SOMP is suitable for the problem, instead of SSP.
\begin{figure*}
\begin{equation}\label{up}
\begin{split}
&\left[ {{{[{{\mathbf{Y}}^{'}}^{(1)}]}_{{{\mathcal S}_0}}}...{{[{{\mathbf{Y}}^{'}}^{(1)}]}_{{{\mathcal S}_{\frac{{D - 1}}{2}}}}}...{{[{{\mathbf{Y}}^{'}}^{(1)}]}_{{{\mathcal S}_{D - 1}}}}...{{[{{\mathbf{Y}}^{'}}^{({N_B})}]}_{{{\mathcal S}_0}}}...{{[{{\mathbf{Y}}^{'}}^{({N_B})}]}_{{{\mathcal S}_{\frac{{D - 1}}{2}}}}}...{{[{{\mathbf{Y}}^{'}}^{({N_B})}]}_{{{\mathcal S}_{D - 1}}}}} \right]\\
& = {{\tilde {\mathbf P}}^{'}}{[{{\mathbf{W}}_L}]_{{{\mathcal S}_{\frac{{D - 1}}{2}}}}}\left[ {{{{\tilde{\bm \theta }}}^{'}}{_0^{(1)}}...{{{\tilde{\bm \theta }}}^{'}}{_{\frac{{D - 1}}{2}}^{(1)}}...{{{\tilde{\bm \theta }}}^{'}}{_{D - 1}^{(1)}}...{{{\tilde{\bm \theta }}}^{'}}{_0^{({N_B})}}...{{{\tilde{\bm \theta }}}^{'}}{_{\frac{{D - 1}}{2}}^{({N_B})}}...{{{\tilde{\bm \theta }}}^{'}}{_{D - 1}^{({N_B})}}} \right]
 + \left[ {{{{\tilde{\bm \eta }}}^{'}}{_0^{(1)}}...{{{\tilde{\bm \eta }}}^{'}}{_{\frac{{D - 1}}{2}}^{(1)}}...{{{\tilde{\bm \eta }}}^{'}}{_{D - 1}^{(1)}}...{{{\tilde{\bm \eta }}}^{'}}{_0^{({N_B})}}...{{{\tilde{\bm \eta }}}^{'}}{_{\frac{{D - 1}}{2}}^{({N_B})}}...{{{\tilde{\bm \eta }}}^{'}}{_{D - 1}^{({N_B})}}} \right]
\end{split}
\end{equation}
\end{figure*}

\section{Complexity}

In this section, we conduct the analysis of the complexity of our proposed DS channel estimation scheme for the large-scale MIMO systems. The computation complexity overhead  consists of three parts, the recovery algorithm of the BDCS-based problem, the BDSO pilot design algorithm and the linear smoothing method. The complexity for the BSOMP algorithm is ${\mathcal O}(G^2)$, which has been analyzed in \cite{Qin2016}. Then as to the proposed BDSO pilot design algorithm,  the total times of multiplication for calculation of $\mu$ is $G{L^2}{N_B}$ and the complexity is ${\mathcal{O}}(IterG{L^2}{N_B})$.  Moreover, the complexity for the linear smoothing method is ${\mathcal{O}}(NKN_B)$. Finally, it is obtained that the complexity for our proposed channel estimation scheme is ${\mathcal O}(G^2+IterG{L^2}{N_B}+NKN_B)$.

\section{Simulation}

\begin{figure}[t]
\centering
\includegraphics[width=0.5\textwidth]{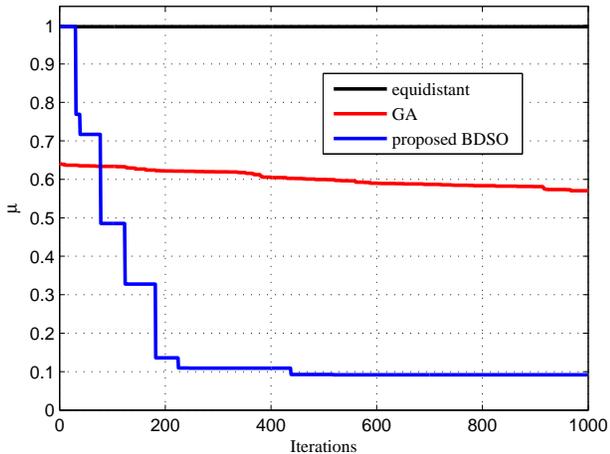}
\caption{The decrease of $\mu$ with different pilot design algorithms.}
\label{F5}
\end{figure}
\begin{figure}[t]
\centering
\includegraphics[width=0.5\textwidth]{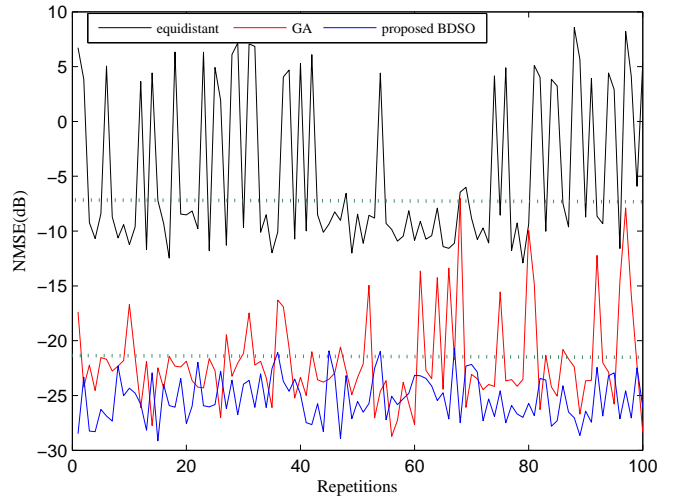}
\caption{The NMSE performance of the  channel estimator with different pilot design algorithms for 100 repetitions.}
\label{F6}
\end{figure}
\begin{figure}[t]
\centering
\includegraphics[width=0.5\textwidth]{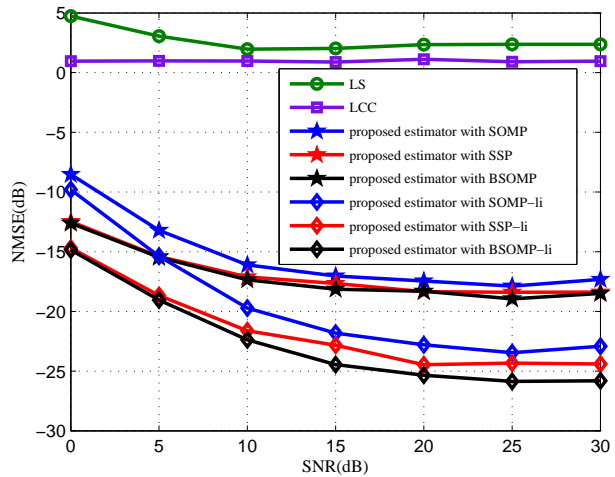}
\caption{The NMSE performance versus the SNR with the number of the antennas $N_B=12$ and the vehicular speed v=300km/h.}
\label{F7}
\end{figure}
\begin{figure}[t]
\centering
\includegraphics[width=0.5\textwidth]{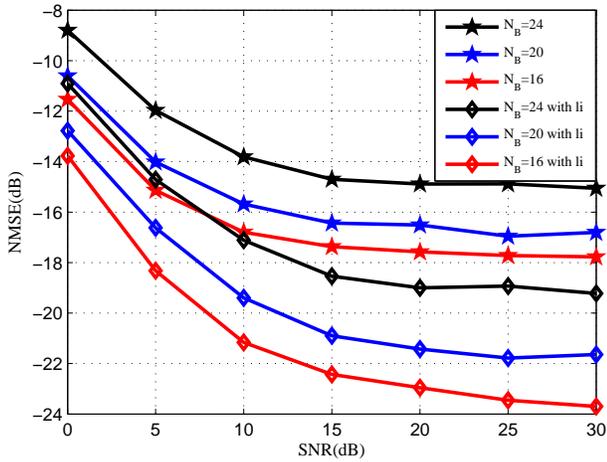}
\caption{The NMSE performance of our proposed BDCS based estimator versus the SNR with different number of antennas and the vehicular speed v=300km/h.}
\label{F8}
\end{figure}
\begin{figure}[t]
\centering
\includegraphics[width=0.5\textwidth]{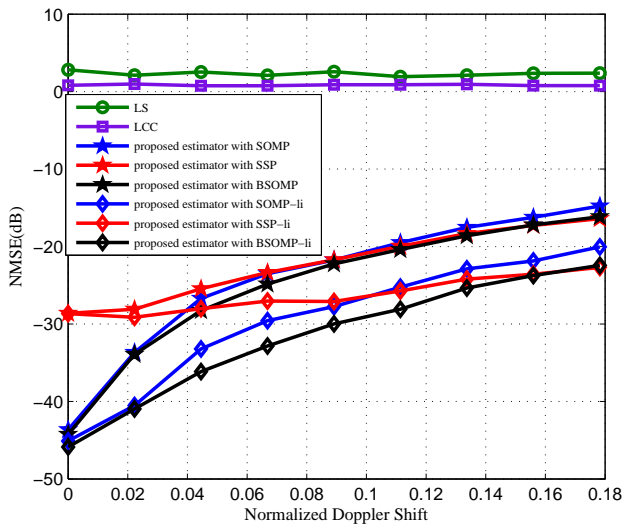}
\caption{The NMSE performance versus the normalized Doppler shift with SNR=30dB and the number of antennas $N_B=12$.}
\label{F9}
\end{figure}
\begin{figure}[t]
\centering
\includegraphics[width=0.5\textwidth]{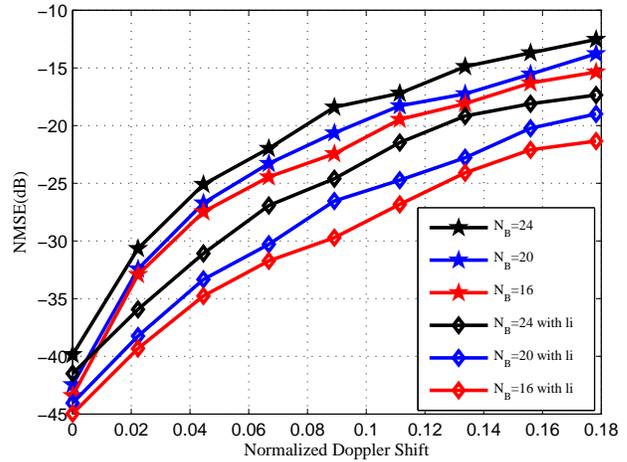}
\caption{The NMSE performance of our proposed BDCS based estimator versus the SNR with different number of antennas and SNR=30dB.}
\label{F10}
\end{figure}
\begin{figure}[t]
\centering
\includegraphics[width=0.5\textwidth]{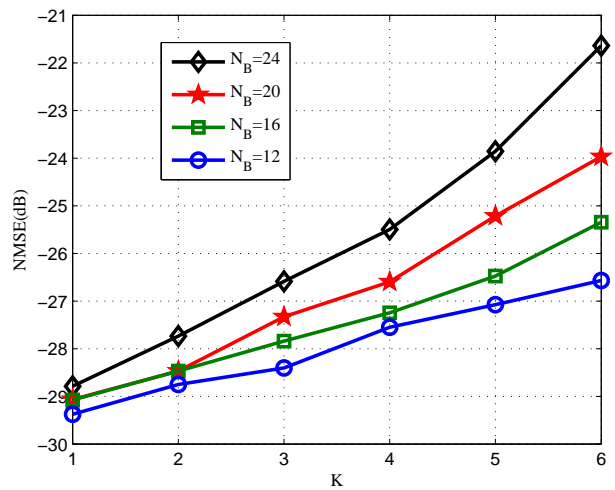}
\caption{The NMSE performance of our proposed BDCS based estimator versus the sparsity $K$ with SNR=25dB and the vehicular speed v=300km/h.}
\label{F11}
\end{figure}

In this section, we conduct simulations by MATLAB to verify the performance gain of our proposed BDCS based channel estimator and the proposed pilot design algorithm BDSO. The  proposed channel estimator is compared with the channel estimation schemes,  least squares (LS) and LCC \cite{Ren2013}. The LS algorithm performs pseudo-inverse directly, which ignores the sparsity of the CE-BEM coefficients and the LCC scheme is CS-based, which utilizes the sparsity in the delay-doppler domain. The recovery algorithms includes SOMP, SSP and BSOMP. As to the proposed pilot design algorithm BDSO, the equidistant pilot pattern \cite{Gao2015} and the genetic algorithm (GA) \cite{He2015}  are selected as the comparison schemes. The equidistant pilot pattern means that the nonzero pilots are distributed uniformly, which have a equal distant with each other and the GA algorithm exploits the genetic search to obtain the optimized pilot positions. The linear smoothing method is represented by ``li" in the figures.

The parameters of our simulated system are given here. We consider an OFDM system with the channel model ITU-Vehicular B \cite{Marques2010}, which is also adopted in the work \cite{Dai2013} for simulation. The number of the subcarriers $N=4096$ and the number of the nonzero ones  $G=192$.  We can see that the pilot overhead is $G(2D-1)/N=23.4\%$. The channel length is $L=200$ with $K=6$ nonzero paths.  The central frequency is 2.35GHz and the bandwidth is 20MHz. QPSK is selected as the modulation technique.

\subsection{The Verification of BDSO}

In Fig. \ref{F5}-\ref{F6}, we verify the effectiveness of our proposed pilot design algorithm BDSO under the number of the antennas $N_B{=}12$ and the vehicular speed  $v{=}300km/h$. Fig. \ref{F5} depicts the variation process of $\mu$ with the increasing of the iterations in Algorithm \ref{AA1}. We can see that the $\mu$ of the equidistant pilot pattern is approximately equal to 1, which is the upper bound of $\mu$. The curve corresponding to GA declines slightly in the iteration process with 1000 times.  The  convergence speed is too low and the final value of $\mu$ is approximately equal to 0.6. As to our proposed BDSO, in the former 500 iterations, it presents obvious trend of  declining and  in the latter 500 iterations, it reaches the steady state. The final value of $\mu$ is less than 0.1.

It is known that  two reference values can be utilized to measure the performance of the CS recovery algorithms. The one is the accuracy of the recovery and the other is the probability of the accurate recovery. Fig. \ref{F6} depicts the normalized mean square error (NMSE) performance of 100 repetitions for each pilot design algorithm with the recovery algorithm SSP. We can see that for the equidistant pilot pattern, the percentage of the NMSE skip points reaches almost $30\%$ and the amplitude of the NMSE skip point reaches 5dB.  Besides, the NMSE of the recovered points is just -10dB, which cannot satisfy the required accuracy of the estimation. The NMSE of the recovered points for the GA algorithm is -22dB, which is acceptable for the channel estimation. But the percentage of the NMSE skip points is $15\%$ and their amplitude reaches -15dB. As to the proposed algorithm BDSO, the NMSE  performance is -25dB  without any skip points. The results of the simulation are in accordance with the conclusion that a smaller $\mu$ leads to a better recovery performance.

\subsection{NMSE Versus SNR}

In Fig. \ref{F7},  we conduct the simulation to compare the performance of LS, LCC and our proposed BDCS based estimator under the number of the antennas $N_B=12$ and the vehicular  speed $v=300km/h$ with the BDSO optimized pilot positions. It is found that the curves of the LS and the LCC schemes are close to 0dB and they hardly vary with the increasing of SNR. They are ineffective under the above conditions since the pilot overhead cannot support so many antennas. We adopt three recovery algorithms to solve our proposed BDCS based estimator, including SOMP, SSP and BSOMP. The SSP and the BSOMP present similar accuracy of recovery and their corresponding curves almost coincide with each other. In contrast to SOMP, the SSP and BSOMP  have performance gain since they utilize both the block and the common sparsity instead of the common sparsity only. Besides, we perform the linear smoothing method combined with the three recovery algorithms and obtain the obvious performance gain. BSOMP-li presents  the best performance of the estimation. In Fig. \ref{F8}, we verify the effectiveness of our proposed BDCS based estimator with BSOMP under $N_B=16$, $N_B=20$, and $N_B=24$. It is found  that the performance deteriorates as the number of antennas increases since  their pilot overhead is the same.
\subsection{NMSE Versus Normalized Doppler Shift}

In Fig. \ref{F9}, we present the variation of the NMSE as the normalized Doppler shift increases under $SNR=30dB$ with the BSOMP algorithm. It is found  that the curve of the LS and LCC schemes are close to 0dB  since they are ineffective under the pilot overhead of 23.4\%. The recovery performance deteriorates with the normalized Doppler shift increasing since the increased ICI leads to the increased modeling error. The algorithm BSOMP has performance gain compared with the SOMP and  the gap is more obvious with linear smoothing method. The algorithm SSP presents the worst performance since it is sensitive to the interference.

In Fig. \ref{F10}, we verify the effectiveness of our proposed BDCS based estimator with more antennas $N_B=16$,  $N_B=20$,  $N_B=24$ utilizing the algorithm BSOMP  under $SNR=30dB$.
It is found  that the performance gets worse  as the number of antennas increases. All curves present the same rising  tendency  as the normalized Doppler shift increases. Besides, the linear smoothing method contributes to the performance gain.

\subsection{NMSE Versus the Sparsity $K$}

Fig. \ref{F11} depicts the variation of NMSE as the channel sparsity $K$ increases for our proposed BDCS based estimator under $SNR=25dB$ exploiting the algorithm BSOMP and the linear smoothing method. We present the simulation for $N_B=12$, $N_B=16$, $N_B=20$ and $N_B=24$. Since the pilot overhead is fixed,  the performance deteriorates as the sparsity increases under the same number of antennas and  the performance gets worse as the number of the antennas increases under the same  channel sparsity.

\section{Conclusion}

We propose a BDCS based DS channel estimator and the corresponding pilot design algorithm BDSO in large-scale MIMO systems. It supports more antennas and guarantees the performance of the estimation with the affordable pilot overhead. In the future, we may take account of the sparsity in beam domain and improve the performance of the estimation further.

\ifCLASSOPTIONcaptionsoff
  \newpage
\fi

\bibliographystyle{IEEEtran}
\bibliography{IEEEabrv,IEEEfull}

\end{document}